\renewcommand{\leq}{\leqslant}
\renewcommand{\leq}{\leqslant}
\newtheorem{theorem}{Theorem }
\newtheorem{remark}{{Remark }}
\newtheorem{lemma}{Lemma }
\numberwithin{equation}{section}
\def\ln{\mbox{ln}\;}
\def\lc{\mbox{l.c.\,}}
\def\diag{\mbox{diag\,}}
\def\im{\text{Im\,}}
\def\openone{\leavevmode\hbox{\small1\kern-3.3pt\normalsize1}}
\def\ad{\text{ad\,}}
\def\openone{\leavevmode\hbox{\small1\kern-3.3pt\normalsize1}}
\def\im{\mbox{Im\,}}
\def\diag{\mbox{diag\,}}
\def\const{\mbox{const\,}}
\def\ad{\mbox{ad\,}}
\def\openone{\leavevmode\hbox{\small1\kern-3.3pt\normalsize1}}
\def\im{\mbox{Im\,}}
\def\diag{\mbox{diag\,}}
\def\const{\mbox{const\,}}
\def\ad{\mbox{ad\,}}
\title[ On the mKdV equations related to the Kac-Moody  algebras]{ On the mKdV equations related to the Kac-Moody  algebras $A_5^{(1)}$ and $A_5^{(2)}$
}
\author{V. S. Gerdjikov}
\address{ Vladimir Stefanov Gerdjikov,
\newline\hphantom{iii} Institute of Mathematics and Informatics
\newline\hphantom{iii} Bulgarian Academy of Sciences
\newline\hphantom{iii} Acad. Georgi Bonchev Str., Block 8,
\newline\hphantom{iii} 1113 Sofia, Bulgaria
\smallskip
\newline\hphantom{iii} Sankt-Petersburg    State    University    of   Aerospace
Instrumentation
\newline\hphantom{iii} St-Petersburg,   B.Morskaya,   67A
\newline\hphantom{iii} St-Petersburg, 190000, Russia
\smallskip
\newline\hphantom{iii}  Institute for Advanced Physical Studies,
\newline\hphantom{iii} 111 Tsarigradsko chaussee,
\newline\hphantom{iii} Sofia 1784, Bulgaria
\smallskip
\newline\hphantom{iii} Institute for Nuclear Research and Nuclear Energy
\newline\hphantom{iii} Bulgarian Academy of Sciences,
\newline\hphantom{iii} 72 Tsarigradsko Chaussee, Blvd.,
\newline\hphantom{iii} 1784 Sofia, Bulgaria
}
\email{vgerdjikov@math.bas.bg}
\thanks{\sc V. S. Gerdjikov, On the mKdV equations related to the Kac-Moody  algebras $A_5^{(1)}$ and $A_5^{(2)}$}
\thanks{\copyright \ V. S. Gerdjikov, 2021}
\thanks{\rm The reported study by  V. S. Gerdjikov was funded  in part by the Bulgarian National Science Foundation  under contract  KP-06N42-2.}
\thanks{\it Submitted April 12, 2021.}
\date{ }
\begin{document}

\maketitle

{\small
\begin{quote}
\noindent\noindent{\bf Abstract.}  We outline the derivation of the mKdV equations related to the Kac-Moody algebras $A_5^{(1)}$ and $A_5^{(2)}$. First we formulate their Lax representations and provide details how they can be obtained from generic Lax operators related to the algebra $sl(6)$ by applying proper Mikhailov type reduction groups $\mathbb{Z}_h$. Here $h$ is the Coxeter number of the relevant Kac-Moody algebra. Next we adapt Shabat's method for constructing the fundamental analytic solutions of the Lax operators $L$. Thus we are able to reduce the direct and inverse spectral problems for $L$ to  Riemann-Hilbert  problems (RHP) on the union of $2h$ rays $l_\nu$. They start from the origin of the complex $\lambda$-plane and close equal angles $\pi/h$. To each $l_\nu$ we associate a subalgebra $\mathfrak{g}_\nu$ which is a direct sum of $sl(2)$-subalgebras. Thus to each regular solution of the RHP we can associate scattering data of $L$ consisting of  scattering matrices $T_\nu \in \mathcal{G}_\nu$ and their Gauss decompositions. The main result of the paper is to extract from $T_0$ and $T_1$ related to the rays $l_0$ and $l_1$ the minimal sets of scattering data $\mathcal{T}_k$, $k=1, 2$. We prove that each of the minimal sets $\mathcal{T}_1$  and $\mathcal{T}_2$  allows one to reconstruct both the scattering matrices $T_\nu$, $\nu =0, 1, \dots 2h$ and the corresponding potentials of the Lax operators $L$.

\medskip

\noindent{\bf Keywords:} mKdV equations, Kac-Moody algebras, Lax operators,
minimal sets of scattering data

\medskip
\noindent{\bf Mathematics Subject Classification: }{17B67, 35P25, 35Q15, 35Q53}

\end{quote}
}

\section{ Introduction }

This paper is an extension of a series of papers on Kax-Moody algebras and mKdV equations \cite{GMSV4, GMSV3, GMSV1, GMSV2, GMSV5} and two other more recent papers \cite{GSIB, GMSV}.
There  we derived explicitly the system of mKdV equations related to several particular choices of Kac-Moody algebras, including some twisted ones like $D_4^{(s)}$, $s=1, 2, 3$, $A_4^{(1)}$ and $A_5^{(2)}$.

The next natural steps to be considered are to solve the direct and inverse scattering method for the relevant Lax operators and to construct their reflectionless potentials and, as a consequence the soliton solutions of the mKdV systems. The methods for doing this have been already analyzed in \cite{FaTa, GeYaV, Ger, SIAM*14, GYa, VG-Ya-13, Y}. Therefore it will not be difficult to specify the construction of the fundamental analytic solutions (FAS) \cite{Sha*75, Sha} of the relevant Lax operators  and formulate the corresponding Riemann-Hilbert problem (RHP). In constructing  the soliton solutions the most effective method known to us is the dressing Zakharov-Shabat method \cite{ZaSha1, ZaSha2}.

In Section 2 below we will outline preliminary known results about the structure of the Lax operators for the case of $A_5^{(1)}$ and $A_5^{(2)}$ Kac-Moody algebras and for the recursion operators, see \cite{GMSV}. Section 3 is devoted to the fundamental analytic solutions (FAS) and to the Riemann-Hilbert problems for both cases. In Section 4 we formulate the minimal sets of scattering data and show that they reconstruct both the potential and the sewing functions of the RHP. The appendices we explain some algebraic details about the structures of Kac-Moody algebras.

\section{Preliminaries}
\subsection{Lax representations: $A_5^{(1)}$ case}

We will assume that the readers are familiar with the theory of simple Lie algebras and Kac-Moody algebras, see \cite{Bourb, Helg, Cart} and their applications in the studies of integrable NLEE \cite{DS, DS0}. Details about the bases and the gradings of the Kac-Moody algebras are given in the appendices. We will consider here a NLEE with a simplest nontrivial dispersion law, which is $f_{\rm mKdV}(\lambda) = \lambda^3 K$.

In this section following our previous papers  we formulate the Lax pairs whose potentials are  elements of the $A^{(1)}_{5}$ and $A^{(2)}_{5}$ algebras for the mKdV equations. .
They represent the third nontrivial member in the hierarchy of soliton equations related to these  algebras. The results presented here are derived in \cite{GMSV1, GMSV4} for the case of $A^{(1)}_{5}$ and in \cite{GMSV, GSIB} for $A^{(2)}_{5}$.

We will consider the Lax pair that is polynomial in the spectral parameter $\lambda$
\begin{equation}\label{eq:A51}\begin{aligned}
L\psi\equiv &\biggl( i\frac{\partial}{\partial x}+Q(x,t)-\lambda J \biggr) \psi=0\,,\\
M\psi\equiv & \biggl( i\frac{\partial}{\partial t}+V_{0}(x,t)+\lambda V_{1}(x,t)+\lambda^{2}V_{2}(x,t)-\lambda^{3} K \biggr) \psi=-\lambda^{3} \psi K.
\end{aligned}\end{equation}

The zero-curvature condition $[L\,,M] = 0$ leads to a polynomial of fourth order in $\lambda$ which has to be identically zero. The Kac-Moody algebra $A_5^{(1)}$ is graded by the Coxeter automorphism $C_1$ (see Appendix A below). The basis we will use is given by: \begin{equation}\label{eq:Jsk}\begin{aligned}
J_s^{(k)} =& \sum_{j=1}^{6} \epsilon_{j,j+s} \omega_1^{-k(j-1)} E_{j,j+s}, &\quad  \epsilon_{j,j+s} =& \begin{cases} 1 & \mbox{if} \quad j+s \leq 6, \\ -1 & \mbox {if} \quad j+s> 6. \end{cases} \\
\left[ J_s^{(k)} , J_l^{(m)} \right] =& \left( \omega_1^{-ms} - \omega_1^{-kl} \right) J_{s+l}^{(k+m)}, &\;
J_{s}^{(k)}J_{p}^{(m)} = &\omega_1^{-sm}J_{s+p}^{(k+m)}, \\
(J_{s}^{(k)})^{-1} =& (J_{s}^{(k)})^{\dagger}.
\end{aligned}\end{equation}
The potential coefficients of the Lax pair are defined by:
\begin{equation}\label{eq:Q-V}\begin{aligned}
Q(x,t) =& \sum_{j=1}^{5}q_{j}(x,t)J_{j}^{(0)}, & \quad V_{1}(x,t) = &\sum_{l=1}^{6}v_{l}^{(1)}(x,t)J_{l}^{(1)}, &\quad
J =& J_{0}^{(1)}, \\
 V_{2}(x,t)= &\sum_{l=1}^{6} v_{l}^{(2)}(x,t)J_{l}^{(2)}, &\quad V_{0}(x,t) =& \sum_{l=1}^{5}v_{l}^{(0)}(x,t)J_{l}^{(0)} , & \quad  K=& J_{0}^{(3)}.
\end{aligned}\end{equation}

The condition $[L,M]=0$ leads to a set of recurrent relations (see \cite{GeYaV, GYa, VG-13}) which allow us to determine
$V^{(k)}(x,t)$ in terms of the potential $Q(x,t)$ and its $x$-derivatives.

Using the choices for $Q$, $J$ and $K$ from (\ref{eq:Q-V}) gives:
\begin{equation}\begin{aligned}\label{eq:Q-Jd}
Q =&  \left(\begin{array}{cccccc} 0 & q_1 & q_2 & q_3 & q_4 & q_5 \\ -q_5 & 0 & q_1 & q_2 & q_3 & q_4 \\ -q_4 & -q_5 & 0 & q_1 & q_2 & q_3 \\ -q_3 & -q_4 & -q_5 & 0 & q_1 & q_2 \\ -q_2 & -q_3 & -q_4 & -q_5 & 0 & q_1 \\ -q_1 & -q_2 & -q_3 & -q_4 & -q_5 & 0  \end{array}\right), \\
J =&  \diag (1, \omega^5, \omega^4, \omega^3, \omega^2, \omega), \qquad
K= \diag(1, -1, 1, -1, 1, -1),
\end{aligned}\end{equation}
where $\omega = \exp(2\pi i/6)$.
These equations admit the following Hamiltonian formulation
\begin{equation*}
\frac{\partial q_{i}}{\partial t}=\frac{\partial}{\partial x} \biggl( \frac{\delta H}{\delta q_{6-i}} \biggr) .
\end{equation*}

The Hamiltonian density is:
\begin{equation}\label{eq:Ha}\begin{aligned}
H =&  -32 \frac{\partial q_1}{ \partial x } \frac{\partial q_5}{ \partial x } + 2 \left( \frac{\partial q_3}{ \partial x } \right)^2 \\
 + & 8 \sqrt{3} \left(  - 2q_2q_3 \frac{\partial q_1}{ \partial x } + 2q_5^2 \frac{\partial q_2}{ \partial x } +(q_1q_2 +q_4q_5) \frac{\partial q_3}{ \partial x } + 2 q_1^2 \frac{\partial q_4}{ \partial x }  - 2 q_3q_4 \frac{\partial q_5}{ \partial x }  \right) \\
+ & 2q_3^4 -24 (q_1q_5 +q_2q_4) q_3^2 +16 (q_1^3 -3q_1q_4^2 -3q_2^2 q_5 + q_5^3)q_3 + 24 (q_1q_2 -q_4q_5)^2.
\end{aligned}\end{equation}


\subsection{ Lax representations: $A_5^{(2)}$ case}
Here we formulate the main results of a recent paper \cite{GMSV}, see also \cite{VSG-88, GIS-2019, GMSV4, GMSV3, GMSV5}. The grading  used here is described in Appendix B. It uses the Coxeter automorphism $C_2$ and splits $A_5$ into 10 subspaces.
The dispersion laws of the NLEE related to $A_5^{(2)}$ are odd functions of $\lambda$; therefore NLS-type equations here are not allowed. So we are left with $f_{\rm mKdV}(\lambda) = \lambda^{3} K$.

The Lax pair has the form
\begin{equation}\label{eq:A52}
\begin{aligned}
L =&  i \partial_x + Q(x,t) - \lambda J, \\
M =&  i \partial_t + V^{(0)}(x,t) + \lambda V^{(1)}(x,t) + \lambda^2 V^{(2)}(x,t) - \lambda^{3}K,
\end{aligned}
\end{equation}
where
\begin{equation}\label{eq:A52q}
Q(x,t) \in \mathfrak{g}^{(0)}, \quad V^{(k)}(x,t) \in \mathfrak{g}^{(k)}, \quad K \in \mathfrak{g}^{(3)}, \quad J\in \mathfrak{g}^{(1)}.
\end{equation}

Here we fix up $J$ and $K$ as follows:
\begin{equation}\label{JK}
J= \diag ( \omega_2^4, \omega_2^2, 1, 0, \omega_2^6, \omega_2^8 ), \qquad K= 20 J^3,
\end{equation}
where $\omega_2=\exp(2i\pi/10)$ and choose
\begin{equation}\label{Q}
Q= \sum_{j=1}^{3} q_j \mathcal{E}_j^{(0)} = \begin{pmatrix}
0 & q_1 & q_3 & q_2 & -q_1 & -q_3 \\
-q_1 & 0 & q_1 & -q_2 & -q_3 & -q_3 \\
-q_3 & -q_1 & 0 & q_2 & -q_3 & -q_1 \\
-q_2 & q_2 & -q_2 & 0 & q_2 & -q_2 \\
q_1 & q_3 & q_3 & -q_2 & 0 & -q_1 \\
q_3 & q_3 & q_1 & q_2 & q_1 & 0 \\
\end{pmatrix}.
\end{equation}
Next we solve the recurrent relations with the result:
\begin{equation}\label{eq:Vp}\begin{aligned}
V_p^{\rm f} =&  \sum_{j=1}^{3}v_{p;j}  \mathcal{E}_j^{(p)} , \qquad p =2,1, 0; \qquad
V_1 = V_1^{\rm f}  + v_{1;4}  J,
\end{aligned}\end{equation}
and obtain explicit expressions for $v_{p;j}$ in terms of $q_j$ and their $x$-derivatives; for details see \cite{GMSV4, GMSV}. The equations of motion:
\begin{equation}\label{eq:eq0}\begin{aligned}
 \frac{\partial q_j}{ \partial x } = \frac{\partial v_{0;j}}{ \partial x }, \qquad j=1,2,3.
\end{aligned}\end{equation}
can be cast in Hamiltonian form as follows:
\begin{equation}\label{eq:eqH}\begin{aligned}
 \frac{\partial q_j}{ \partial x } = \frac{\partial }{ \partial x } \frac{\delta H}{ \delta q_j(x) } =  \frac{\partial v_{0;j}}{ \partial x }, \qquad j=1,2,3.
\end{aligned}\end{equation}
where
\begin{equation}\begin{aligned}\label{eq:H2}
H =&  2 \left\{  ( 3 \sqrt{5} + 5) \left(  \frac{\partial q_1}{ \partial x } \right)^2
- 10 \left( \frac{\partial q_2}{ \partial x } \right)^2 - ( 3 \sqrt{5} - 5) \left( \frac{\partial q_3}{ \partial x } \right)^2  \right\} \\
+ & 20 \left( c_2^- q_3^2 + c_2^+ q_1 q_3 - 2  c_2^+ q_2^2 \right) \frac{\partial q_1}{ \partial x } - 20\left( c_2^-  q_1q_3 +  c_2^+ q_1^2  - 2 c_2^- q_2^2 \right) \frac{\partial q_3}{ \partial x } \\
 + & 40 \left( - c_2^- q_3 +  c_2^+ q_2 \right) q_2 \frac{\partial q_1}{ \partial x }
+ 20 q_2^4 + 40 q_1 q_3 (q_3^2 -q_1^2) + 60 (q_1^2 q_3 ^2 - q_2^2 q_3 ^2 - q_1^2 q_2 ^2).
\end{aligned}\end{equation}
where
\begin{equation}\label{eq:cpm}\begin{aligned}
 c_2^+ = \sqrt{2 + \frac{2}{\sqrt{5}} }, \qquad c_2^- =  \sqrt{2 - \frac{2}{\sqrt{5}} }.
\end{aligned}\end{equation}

Let us now repeat the calculations using the second type of grading, see eq. (\ref{eq:gr2c}).
In it the potential takes diagonal form while $J$ becomes the sum of admissible roots.
We can do the grading using an alternative choice of the Coxeter automorphism given by (\ref{eq:C2t}), (\ref{eq:Etij}). This means
\begin{equation}\label{eq:LM0}
\begin{aligned}
\tilde{ Q}(x,t) =&  i\sum_{j=1}^{3}u_{j}(x,t)\mathcal{E}_{jj}^{+}, \qquad
\tilde{ J} =\mathcal{E}_{21}^{+}+\mathcal{E}_{32}^{+}+\frac{1}{2} \mathcal{E}_{43}^{+} +\frac{1}{2}\mathcal{E}_{15}^{-}, \\
V^{(0)}(x,t) =& \sum_{j=1}^{3}v_{j}^{(0)}\mathcal{E}_{jj}^{+},\\
V^{(1)}(x,t) =& v_{1}^{(1)}\mathcal{E}_{21}^{+}+v_{2}^{(1)}\mathcal{E}_{32}^{+} +\frac{1}{2} v_{3}^{(1)}\mathcal{E}_{43}^{+}+\frac{1}{2}v_{4}^{(1)}\mathcal{E}_{15}^{-}, \\
V^{(2)}(x,t) =& -v_{1}^{(2)}\mathcal{E}_{31}^{+}-v_{2}^{(2)}\mathcal{E}_{42}^{+}-\frac{1}{2}v_{3}^{(2)}\mathcal{E}_{14}^{-}, \qquad \qquad \tilde{K}=5 \tilde{J}^{3},
 \end{aligned}
 \end{equation}
where
\begin{equation}\label{eq:V2}
\begin{aligned}
v_{1}^{(2)} = -5i (u_{1}+u_{2}+u_{3}),  \qquad  v_{2}^{(2)} =&  -5i u_{2}, \qquad
v_{3}^{(2)} = -5i (u_{1}-u_{2}-u_{3}),
\end{aligned}
\end{equation}
\begin{equation}\label{eq:V1}
\begin{aligned}
v_{1}^{(1)} =& 10 \left( u_{1}u_{2}-\frac {\partial u_{1}}{\partial x} \right)+v_{4}^{(1)}, \\
v_{2}^{(1)} =& 5 \left ( u_{3}^{2}-u_{1}^{2}+u_{1}u_{2}+u_{2}u_{3}+\frac {\partial }{\partial x} \left( u_{3}+u_{2}-u_{1} \right) \right)+v_{4}^{(1)}, \\
v_{3}^{(1)} =& 5 \left ( u_{3}^{2}-u_{2}^{2}-u_{1}^{2}+u_{1}u_{2}+\frac {\partial }{\partial x} \left( u_{3}+2u_{2}-u_{1} \right) \right)+v_{4}^{(1)}, \\
v_{4}^{(1)}=& 2u_{2}^{2}+2u_{1}^{2}-3u_{3}^{2}-5u_{1}u_{2} +\frac {\partial }{\partial x} (5u_{1}-4u_{2}-3u_{3})  .
\end{aligned}
\end{equation}

For $V^{(0)}(x,t)$ we find:
\begin{equation}\label{eq:V0}
\begin{aligned}
v_{1}^{(0)} =& i \biggl( -5\frac {\partial^{2}u_{1}}{\partial {x}^{2}}+3u_{1}\frac{\partial}{\partial x}(3u_{2}
+u_{3})-2u_{1}^{3}+3u_{1}(u_{2}^{2}+u_{3}^{2}) \biggr) ,\\
v_{2}^{(0)} =&  i \biggl(  \frac {\partial^{2}}{\partial {x}^{2}}(4u_{2}+3u_{3})+3u_{2}\frac{\partial u_{3}}{\partial x}
-9u_{1}\frac{\partial u_{1}}{\partial x}+6u_{3}\frac{\partial u_{3}}{\partial x}-2u_{2}^{3}+3u_{2}(u_{1}^{2}+u_{3}^{2}) \biggr) ,\\
v_{3}^{(0)} =&  i \biggl( \frac {\partial^{2}}{\partial {x}^{2}}(u_{3}+3u_{2})-6u_{3}\frac{\partial u_{2}}{\partial x}
-3u_{1}\frac{\partial u_{1}}{\partial x}-3u_{2}\frac{\partial u_{2}}{\partial x}-2u_{3}^{3}+3u_{3}(u_{1}^{2}+u_{2}^{2}) \biggr).
\end{aligned}
\end{equation}

Finally the set of mKdV equations takes the form:
\begin{equation}\label{eq:mkdv2}
\begin{aligned}
\frac{\partial u_{1}}{\partial t} =& \frac{\partial}{\partial x} \biggl( -5\frac {\partial^{2}u_{1}}{\partial {x}^{2}}
+3u_{1}\frac{\partial}{\partial x}(3u_{2}+u_{3})-2u_{1}^{3}+3u_{1}(u_{2}^{2}+u_{3}^{2}) \biggr) ,\\
\frac{\partial u_{2}}{\partial t} = & \frac{\partial}{\partial x} \biggl( \frac {\partial^{2}}{\partial {x}^{2}}(4u_{2}
+3u_{3})+3u_{2}\frac{\partial u_{3}}{\partial x}-9u_{1}\frac{\partial u_{1}}{\partial x} \\
 &\qquad \qquad +6u_{3}\frac{\partial u_{3}}{\partial x}-2u_{2}^{3}+3u_{2}(u_{1}^{2}+u_{3}^{2}) \biggr) ,\\
\frac{\partial u_{3}}{\partial t} =& \frac{\partial}{\partial x} \biggl( \frac {\partial^{2}}{\partial {x}^{2}}(u_{3}
+3u_{2})-6u_{3}\frac{\partial u_{2}}{\partial x}-3u_{1}\frac{\partial u_{1}}{\partial x}
-3u_{2}\frac{\partial u_{2}}{\partial x}-2u_{3}^{3}+3u_{3}(u_{1}^{2}+u_{2}^{2}) \biggr) .
\end{aligned}
\end{equation}

These equations acquire  Hamiltonian form:
\begin{equation}
\label{Hamilton}
\frac{\partial u_{i}}{\partial t}=\frac{\partial}{\partial x} \left( \frac{\delta H}{\delta u_{i}} \right) = \frac{\partial v_{0;i}}{ \partial x }
\end{equation}
where the Hamiltonian is:
\begin{equation}\label{eq:H22}
\begin{aligned}
H=& \int_{-\infty}^{\infty} dx \biggl( -\frac{1}{2}\sum_{i=1}^{3}u_{i}^{4}+\frac{3}{2} \underset{i<j}{\sum_{i=1}^{3}
\sum_{j=1}^{3}}u_{i}^{2}u_{j}^{2}+\frac{5}{2} \left( \frac{\partial u_{1}}{\partial x} \right)^{2}
-2 \left( \frac{\partial u_{2}}{\partial x} \right)^{2} \\
& -\frac{1}{2} \left( \frac{\partial u_{3}}{\partial x} \right)^{2}+
+\frac{\partial u_{2}}{\partial x} \left( \frac{9}{2}u_{1}^{2}-3u_{3}^{2} \right) +\frac{3}{2}\frac{\partial u_{3}}{\partial x}(u_{1}^{2}
+u_{2}^{2})-3 \left( \frac{\partial u_{2}}{\partial x} \right) \left( \frac{\partial u_{3}}{\partial x} \right) \biggr) .
\end{aligned}
\end{equation}
Using the second type of grading is in fact equivalent to the first one. One can check that the two types of gradings are related by a similarity transformations of the form:
\begin{equation}\label{eq:simil}
w_0^{-1}\tilde{Q}w_0 = Q, \qquad  w_0^{-1}\tilde{J}w_0 = J. \qquad
w_0 = \frac{1}{\sqrt{5}}\left(\begin{array}{cccccc}
 \frac{1}{2} & \frac{1}{2} & \frac{1}{2} & i\frac{\sqrt{5}}{2} &\frac{1}{2} & \frac{1}{2} \\
\omega_2 ^{-4}  & \omega_2 ^{-2}  & 1 & 0 & \omega_2 ^{4} & \omega_2 ^{2}  \\
 \omega_2 ^{2}  & \omega_2 ^{-4}  & 1 & 0 & \omega_2 ^{-2} & \omega_2 ^{4} \\
\omega_2 ^{-2}  & \omega_2 ^{4}  & 1 & 0 & \omega_2 ^{2} & \omega_2 ^{-4} \\
 \omega_2 ^{4}  & \omega_2 ^{2}  & 1 & 0 & \omega_2 ^{-4} & \omega_2 ^{-2} \\
1  & 1 & 1 & -i\sqrt{5} & 1 & 1  \end{array}\right).
\end{equation}
Effectively we find that $u_j$ and $q_s$ are related linearly as follows:
\begin{equation}\label{eq:u-q123}\begin{aligned}
u_1 =&  c^- q_1 + c^+ q_3, &\quad u_2 =&  \frac{1}{\sqrt{5}} q_2 , &\quad u_3 =&  -c^- q_1 + c^+ q_3, \\ c^+ =&  \frac{\sqrt{10 + 2 \sqrt{5}} }{10}, & \quad c^- =&  \frac{\sqrt{10 - 2 \sqrt{5}} }{10} .
\end{aligned}\end{equation}

\subsection{ The recursion relations and the recursion operators $\Lambda_k$}

Our aim here is to describe the  hierarchies of equations in terms of the recursion operators $\Lambda_k$. The idea is to look upon the compatibility conditions  as recurrent relations which will be solved using the recursion operators, see \cite{GSIB, GMSV, GMSV4, GMSV3, GMSV5}. The initial condition is provided by:
\begin{equation}\label{eq:V2Ic}\begin{aligned}
 V_2 = \ad_J^{-1} [K,Q]
\end{aligned}\end{equation}
Note that the operator $\ad_J$ acting on any element $X\in \mathfrak{g}$ by $\ad_J X = [J,X]$ has a kernel; therefore it could be inverted only if $X$ belongs to its image. Therefore in solving the recurrent relations we will need to split each of the $V_s$ into, roughly speaking, `diagonal` and `off-diagonal` parts:
\begin{equation}\label{eq:Vsfd}\begin{aligned}
 V_s = V_s^{\rm f} + V_s^{\rm d},
\end{aligned}\end{equation}
where $V_s^{\rm f} \in \im \ad_J $ and $V_s^{\rm d} $ is such that $\ad_J V_s^{\rm f} =0 $.

 \begin{equation}\label{eq:Vnm1}\begin{aligned}
  V_{n-1}(x,t) \equiv V_{n-1}^{\rm f}(x,t) = \sum_{p=1}^{r} \frac{\alpha_p(K) }{\alpha_p(J)} q_p(x,t) \mathcal{E}_p^{(n_1-1)}.
 \end{aligned}\end{equation}

Now let us assume that $s_1$ is an exponent and split the third equation in (\ref{eq:LM0}) into diagonal and off-diagonal parts.
Evaluating the Killing form of this equation with $\mathcal{H}_1^{h-s_1}$ we obtain:
\begin{equation}\label{eq:ws}\begin{aligned}
 w_s(x,t) = \frac{i}{c_{s_1}} \partial_x^{-1} \left\langle [Q,V_s^{\rm f}] ,\mathcal{H}_1^{h-s_1} \right\rangle + \const, \qquad
 c_{s_1} = \left\langle \mathcal{H}_1^{s_1}, \mathcal{H}_1^{h-s_1}\right\rangle .
\end{aligned}\end{equation}
In what follows for simplicity we will set all these integration constants to 0. A diligent reader can easily work out the more
general cases when some of these constants do not vanish.
The off-diagonal part of the third equation in (\ref{eq:LM0}) gives:
\begin{equation}\label{eq:Vsm1f}\begin{aligned}
i \partial_x V_s^{\rm f} + [Q, V_s^{\rm f}]^{\rm f} + [Q, w_s \mathcal{H}_1^{s_1}] = [J, V_{s-1}],
\end{aligned}\end{equation}
i.e.
\begin{equation}\label{eq:Vsm1f2}\begin{aligned}
 V_{s-1}^{\rm f} = \ad_J^{-1} \left( i \partial_x V_s^{\rm f} + [Q, V_s^{\rm f}]^{\rm f} + [Q, w_s \mathcal{H}_1^{s_1}]\right) = \Lambda_{s_1} V_s^{\rm f}.
\end{aligned}\end{equation}
Thus we obtained the integro-differential operator $\Lambda_{s_1} $ which acts on any $Z \equiv Z^{\rm f} \in \mathfrak{g}^{(s_1)}$ by:
\begin{equation}\label{eq:Lams1}\begin{aligned}
 \Lambda_{s_1} Z = \ad_J^{-1} \left( i \partial_x Z + [Q, Z]^{\rm f} + \frac{i}{c_{s_1}}[Q,  \mathcal{H}_1^{s_1}]
 \partial_x^{-1} \left\langle [Q,Z] ,\mathcal{H}_1^{h-s_1} \right\rangle  \right).
\end{aligned}\end{equation}
If $s_1$ is not an exponent we have only to work out the off-diagonal part of the third equation in (\ref{eq:LM0})
with the result:
\begin{equation}\label{eq:Vsm1f3}\begin{aligned}
 V_{s-1}^{\rm f} =&  \ad_J^{-1} \left( i \partial_x V_s^{\rm f} + [Q, V_s^{\rm f}]^{\rm f} \right) = \Lambda_{0} V_s^{\rm f}, \\
 \Lambda_0 Z =&  \ad_J^{-1} \left( i \partial_x Z + [Q, Z]^{\rm f} \right).
\end{aligned}\end{equation}
Now $\Lambda_0$ is a differential operator.

Now we can treat the hierarchies related to $A_5^{(1)}$. Since the Coxeter number is 6 and the exponents are  $1, 2, 3, 4, 5$ the results are:
\begin{equation}\label{eq:A51Mkdv}\begin{aligned}
n=&  6n_0 +1 &\quad \partial_t Q =&  \partial_x \left(\bm{\Lambda}^{n_0} Q(x,t)\right) ,&\quad  f(\lambda) =&  \lambda^{N_1} \mathcal{H}_1^{(1)}, \\
n=&  6n_0 +a &\quad \partial_t Q =&  \partial_x \left( \bm{\Lambda}^{n_0} \Lambda_{a-1} \dots  \Lambda_0 \ad_J^{-1} [\mathcal{H}_1^{a}, Q(x,t)] \right),
&\quad f(\lambda) =&  \lambda^{N_a}\mathcal{H}_1^{(a)},
\end{aligned}\end{equation}
where $N_a=6n_0+a$, $a=1, 2,\dots, 5$ and $\bm{\Lambda}=\Lambda_1 \Lambda_2 \Lambda_3\Lambda_4 \Lambda_5\Lambda_0$.

Similarly we can treat the hierarchies related to $A_5^{(2)}$. Here the Coxeter number is 10 and the exponents are $1, 3, 5, 7, 9$. The results are:
\begin{equation}\label{eq:A52Mkdv}\begin{aligned}
n=&  10n_0 +1 &\quad \partial_t Q =&  \partial_x \left(\bm{\Lambda}^{n_0} Q(x,t)\right) ,\\
n=&  10n_0 +3 &\quad \partial_t Q =&  \partial_x \left( \bm{\Lambda}^{n_0} \Lambda_1 \Lambda_0 \ad_J^{-1} [\mathcal{H}_1^{(3)}, Q(x,t)] \right), \\
n=&  10n_0 +5 &\quad \partial_t Q =&  \partial_x \left(\bm{\Lambda}^{n_0} \Lambda_1 \Lambda_0 \Lambda_3 \Lambda_0  \ad_J^{-1}[\mathcal{H}_1^{(5)}, Q(x,t)]\right) ,  \\
n=&  10n_0 +7 &\quad \partial_t Q =&  \partial_x \left(\bm{\Lambda}^{n_0} \Lambda_1 \Lambda_0 \Lambda_3 \Lambda_0 \Lambda_5 \Lambda_0  \ad_J^{-1}[\mathcal{H}_1^{(7)}, Q(x,t)]\right) , \\
n=&  10n_0 +9 &\quad \partial_t Q =&  \partial_x \left(\bm{\Lambda}^{n_0} \Lambda_1 \Lambda_0 \Lambda_3 \Lambda_0 \Lambda_5 \Lambda_0 \Lambda_7 \Lambda_0 \ad_J^{-1}[\mathcal{H}_1^{(9)}, Q(x,t)]\right) ,
\end{aligned}\end{equation}
where $\bm{\Lambda}=\Lambda_1 \Lambda_0 \Lambda_3\Lambda_0 \Lambda_5\Lambda_0 \Lambda_7\Lambda_0 \Lambda_9\Lambda_0$ and the dispersion laws are given by $f_j(\lambda) = \lambda^{10n_0+n_j} \mathcal{H}_{n_j}^{(1)}$, $n_j = 2j-1$ being the exponents of $A_5^{(2)}$.

\section{The Riemann-Hilbert problem}

\subsection{General aspects}

The general methods for constructing the FAS of the Lax operators started with the pioneer papers by A. B. Shabat \cite{Sha*75, Sha} in which he constructed the FAS of a class of $n\times n$ Lax operators of the type (\ref{eq:A51}) in which $J = \diag (a_1, \dots, a_n) $ assuming that the eigenvalues of $J$ are real and are ordered $a_k > a_j$ if $k<j$. The continuous spectrum of such $L$ operator with rapidly vanishing potential $Q$ fills up the real axis in the complex $\lambda$-plane. One of the corresponding FAS $\chi^+(x,\lambda)$ allowed analytic extension into the upper half plane $\mathbb{C}_+$; the other one $\chi^-(x,\lambda)$ is analytic in the lower half plane $\mathbb{C}_-$ and on the real axis they are related linearly:
\begin{equation}\label{eq:rhp0}\begin{aligned}
\chi^+(x,t,\lambda) = \chi^-(x,t,\lambda) G_0(t,\lambda),
\end{aligned}\end{equation}
where the sewing function $G(t,\lambda)$ is expressed by the Gauss factors of the corresponding scattering matrix. A simple transformation from $\chi^\pm(x,\lambda)$ to $\xi^\pm(x,\lambda) = \chi^\pm(x,\lambda)e^{i\lambda Jx}$ allows one to reformulate the RHP (\ref{eq:rhp0}) as:
\begin{equation}\label{eq:rhp1}\begin{aligned}
\xi^+(x,t,\lambda) = \xi^-(x,t,\lambda) G(x,t,\lambda), \qquad G(x,t,\lambda) = e^{-i\lambda Jx}  G_0(t,\lambda)  e^{i\lambda Jx} .
\end{aligned}\end{equation}
The RHP (\ref{eq:rhp1}) has the advantage that it allows canonical normalization in the form $\lim_{\lambda \to \infty}\xi^\pm(x,t,\lambda) =\openone $.

Shabat and Zakharov developed further these ideas by discovering the deep relation between the RHP (\ref{eq:rhp1}) and the corresponding pair of Lax operators. They proved a theorem \cite{ZaSha1, ZaSha2} stating that if $\xi^\pm(x,t,\lambda) $ satisfy the RHP (\ref{eq:rhp1}) and the sewing function $G(x,t,\lambda) $ has proper $x$-dependence, then the corresponding $\chi^\pm (x,t,\lambda)$  will be FAS of the relevant Lax pair.

The next important step that they proposed was to devise a method of deriving a special class of singular solutions to the RHP. Today it is known as the Zakharov-Shabat dressing method \cite{ZaSha1, ZaSha2, ZMNP}. It has several formulations and is one of the best known methods for constructing the multi-soliton solutions of the integrable NLEE.
Later Shabat's results were generalized to the class of Lax operators whose potentials $Q$ and $J$ take values in simple Lie algebras $\mathfrak{g}$ \cite{Ge}.

The next important step in this direction was taken by Beals and Coifman \cite{BC} who treated the general case of $n\times n$ Lax operators with complex-valued $J$. The substantial difference with the Shabat's case was that the continuous spectrum of $L$ fills up a set of rays $l_p$, which split the complex $\lambda$-plane $\mathbb{C}$ into several sectors $\Omega_p$.
In each of these sectors Beals and Coifman succeeded to construct FAS $\xi_p(x,\lambda)$.
Let us assume that the sectors $\Omega_p$ and $\Omega_s$ share the ray $l_p$, then we can have a set of relations like
\begin{equation}\label{eq:rhp2}\begin{aligned}
\xi_p(x,t,\lambda) = \xi_{s}(x,t,\lambda) G_p(x,t,\lambda), \quad G_p(x,t,\lambda) = e^{-i\lambda Jx}  G_{p0}(t,\lambda)  e^{i\lambda Jx} ,
\end{aligned}\end{equation}
where $l_p = \Omega_p\cap \Omega_s$, $p=1,2, \dots$, which will be a generalized RHP. Zakharov-Shabat theorem mentioned above and the dressing method can easily be extended to such generalized RHP. And of course the results of Beals and Coifman were generalized also to the case when $Q(x,t)$ and $J$ take values in any simple Lie  algebra $\mathfrak{g}$ \cite{VG-Ya-13, GYa, SIAM*14}.

Let us also mention briefly how the analyticity properties of $\xi_\nu (x,t, \lambda)$ are proved. Since $\chi_\nu (x,t, \lambda)$ are fundamental solutions of the operators $L$ and $M$ above then $\xi_\nu (x,t, \lambda)$ will be fundamental solutions of the related operators:
\begin{equation}\label{eq:eqXi}\begin{aligned}
 \tilde{L}\chi_\nu \equiv i \frac{\partial \xi_\nu}{ \partial x } &+ Q(x,t)\xi_\nu (x,t, \lambda) - \lambda [J, \xi_\nu] =0, \\
 \tilde{M}\chi_\nu \equiv  i \frac{\partial \xi_\nu}{ \partial t } &+ V(x,t,\lambda)\xi_\nu (x,t, \lambda) - \lambda^3 [K, \xi_\nu ] =0, \quad  V(x,t,\lambda) = \sum_{p=0}^{2} V_p(x,t) \lambda^p.
\end{aligned}\end{equation}
We already made special choices for both $Q(x,t)$ and $J$ using two different specific gradings of $A_5 \simeq sl(6)$. Each of these choices can be viewed as a realization of Mikhailov reduction group $\mathbb{Z}_h$ \cite{Mikh}:
\begin{equation}\label{eq:Zh0}\begin{aligned}
C(Q(x,t) - \lambda J) =&  Q(x,t) - \lambda \omega J, \\ C(V(x,t,\lambda) - \lambda^3 K) =&  V(x,t, \lambda \omega) - \lambda^3 \omega^3 K,
\end{aligned}\end{equation}
with properly chosen Coxeter automorphism $C$ such that $C^h =\openone$, and $h$ is the Coxeter number. In other words  the Lax pairs with $\mathbb{Z}_h$  reductions of Mikhailov type \cite{Mikh} provide an important class of Lax operators with complex-valued $J$. It is also natural to remember that in fact the potentials $Q(x,t) - \lambda J$ and $V(x,t,\lambda) - \lambda^3 K$ of these Lax pairs take values in a Kac-Moody algebras, which are based on the simple Lie algebras graded by Coxeter automorphisms \cite{Bourb, DS0, DS, Kac, Cart}.

The derivation of the FAS of eq. (\ref{eq:eqXi}) is based on the set of integral equations which incorporate also the asymptotic behavior of $\xi_\nu (x,t, \lambda)$ for $x\to \pm \infty$. These equations have the form (see \cite{BC, VG-Ya-13, GYa, SIAM*14}):
\begin{equation}\label{eq:Inteq}\begin{aligned}
( \xi_\nu (x,t, \lambda) )_{kj} =&  \delta_{kj} + i \int_{-\infty}^{x} dy\; \left( Q(y,t) \xi_\nu (y,t, \lambda)\right)_{kj} e^{-i\lambda (J_k - J_j)(x-y)} , \\
  & \qquad \qquad \mbox{for} \quad \lambda \in \Omega_\nu \quad \mbox{and}\quad \im \lambda (J_k - J_j) \mathop{\leq}\limits_{\nu} 0,\\
( \xi_\nu (x,t, \lambda) )_{kj} =&   i \int_{\infty}^{x} dy\; \left( Q(y,t) \xi_\nu (y,t, \lambda)\right)_{kj} e^{-i\lambda (J_k - J_j)(x-y)} , \\
  &\qquad \qquad  \mbox{for} \quad \lambda \in \Omega_\nu \quad \mbox{and}\quad \im \lambda (J_k - J_j) \mathop{>}\limits_{\nu} 0,
\end{aligned}\end{equation}
where the index $\nu$ in the inequalities in eq. (\ref{eq:Inteq}) means that we must restrict $\lambda \in \Omega_\nu$.

Roughly speaking our first task in analyzing the integral equations (\ref{eq:Inteq}) must be to determine those lines in the complex $\lambda$-plane on which the exponential factors in the integrands oscillate. Normally these lines will constitute the continuous spectrum of $\tilde{L}$. They would be determined by $\im \lambda (J_k - J_j) =0$, which can be written in the form:
\begin{equation}\label{eq:alJ}\begin{aligned}
 \im \lambda \alpha(J) =0,
\end{aligned}\end{equation}
where $\alpha = e_k - e_j$ is a root of $A_5$.  The set of equations (\ref{eq:alJ}) where $\alpha$ runs over the root system $\Delta$ of $A_5$ are simple algebraic equations. Their solutions are collected in Table \ref{tab:2} for $A_5^{(1)}$ and in Table \ref{tab:4} for $A_5^{(2)}$. Thus we establish that the continuous spectrum of $\tilde{L}$ fills up all rays $l_\nu \equiv \arg \lambda = \nu \pi/h$, $\nu= 0,1,\dots, 2h-1$.

\begin{lemma}\label{lem:1}
To each pair of rays $l_\nu \cup l_{2h-\nu}$ there corresponds a subalgebra $\mathfrak{g}_\nu \subset sl(6)$ which in the case of $A_5^{(1)}$ is isomorphic either to $sl(2)\oplus sl(2)$, or to $sl(2)\oplus sl(2) \oplus sl(2)$. In the case of $A_5^{(2)}$ it is isomorphic either to $sl(2)\oplus sl(2)$, or to $sl(2) $.
\end{lemma}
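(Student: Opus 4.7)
The plan is to determine, for each ray $l_\nu$, the set of roots $\alpha\in\Delta(A_5)$ for which the exponential kernel $e^{-i\lambda\alpha(J)(x-y)}$ in the integral equations (\ref{eq:Inteq}) is merely oscillatory on $l_\nu$, i.e.\ $\lambda\alpha(J)\in\mathbb R$. These are precisely the roots whose ladder generators, together with their Cartan companions, span $\mathfrak g_\nu$. The isomorphism class of $\mathfrak g_\nu$ will then follow from (i) counting such roots and (ii) checking that their index pairs are mutually disjoint, so that the corresponding $sl(2)$-triples commute.

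Concretely, for $A_5^{(1)}$ with $J_k=\omega^{-(k-1)}$, $\omega=e^{i\pi/3}$, the elementary identity
\[
\alpha_{kj}(J)=J_k-J_j=-2i\sin\!\bigl((k-j)\pi/6\bigr)\,e^{-i(k+j-2)\pi/6}
\]
gives $\arg\alpha_{kj}(J)\equiv -\pi(k+j+1)/6\pmod{\pi}$, so the resonance condition $\arg\lambda+\arg\alpha_{kj}(J)\equiv 0\pmod{\pi}$ places $\alpha_{kj}$ on the line carrying $l_\nu$ with $\nu\equiv k+j+1\pmod{6}$. A case-by-case listing yields three positive roots for each $\nu\in\{0,2,4\}$ (e.g.\ $\{1,4\},\{2,3\},\{5,6\}$ for $\nu=0$) and two positive roots for each $\nu\in\{1,3,5\}$ (e.g.\ $\{1,5\},\{2,4\}$ for $\nu=1$); in every class the index pairs are pairwise disjoint, so the triples $\{E_{k_ij_i},E_{j_ik_i},h_{\alpha_i}\}$ mutually commute and $\mathfrak g_\nu$ splits as $sl(2)\oplus sl(2)\oplus sl(2)$ or $sl(2)\oplus sl(2)$. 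The same strategy applied to $A_5^{(2)}$, using $J$ from (\ref{JK}) and treating the roots $\alpha_{k4}$, $\alpha_{4j}$ separately via $J_4=0$, produces a single root (necessarily touching index $4$) for each even $\nu$ and two disjoint-index roots for each odd $\nu$, whence $\mathfrak g_\nu\cong sl(2)$ or $sl(2)\oplus sl(2)$, as claimed.

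The pairing $l_\nu\cup l_{2h-\nu}$ in the statement reflects the natural $\mathbb Z_2$ action $\lambda\mapsto\bar\lambda$ on the spectral plane: this reflection exchanges the two rays and intertwines the associated subalgebras (which share the same collection of index pairs up to the involution $J\mapsto\bar J$), so attaching a single symbol $\mathfrak g_\nu$ to the pair is well defined. The one substantive step in the whole argument is the disjointness check in each residue class --- a rigid consequence of the Coxeter grading of $J$ adopted in (\ref{eq:Q-Jd}) and (\ref{JK}) --- since it is precisely what rules out larger reductive subalgebras and cements the direct-sum structure.
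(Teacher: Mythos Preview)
Your argument is correct and is essentially the paper's own proof made explicit: the paper simply cites Tables~\ref{tab:2} and~\ref{tab:4} and notes that the non-proportional roots attached to each line through the origin are mutually orthogonal, which for $A_5$ roots $e_k-e_j$ is exactly your disjoint-index-pair criterion. Two cosmetic remarks: your residue-class labelling is reflected relative to Table~\ref{tab:2} (your $\nu=1$ corresponds to the paper's $l_5\cup l_{11}$, a sign convention issue), and the pairing actually intended in the lemma is $l_\nu\cup l_{\nu+h}$ (antipodal rays, as used in the tables and in Remark~\ref{rem:1}) rather than the complex-conjugate pair $l_\nu\cup l_{2h-\nu}$ you discuss---a harmless typo in the statement that does not affect the isomorphism type.
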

\begin{proof}
 It is obvious that if $\alpha$ is a solution to eq. (\ref{eq:alJ}), then $-\alpha$ will also be a solution. It remains to check that any two non-proportional roots  related to each pair of rays  $l_\nu \cup l_{2h-\nu}$ are mutually orthogonal. Thus inspecting Table \ref{tab:2} we prove the lemma for $A_5^{(1)}$. Similarly, inspecting Table \ref{tab:4} we prove the lemma for $A_5^{(2)}$.
\end{proof}

\begin{theorem}\label{thm:1}
 The solution $\xi_\nu (x,t,\lambda)$ of eq. (\ref{eq:Inteq}) is an analytic function of $\lambda$ for $\lambda \in \Omega_\nu$. In addition:
 \begin{equation}\label{eq:RC}\begin{aligned}
 C(\xi_\nu (x,t,\lambda)) = \xi_{\nu+2} (x,t,\lambda \omega).
 \end{aligned}\end{equation}
\end{theorem}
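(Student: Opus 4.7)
The plan is to establish analyticity via a standard Volterra Neumann series for the integral equations \eqref{eq:Inteq}, and then to deduce the covariance \eqref{eq:RC} by applying the Coxeter automorphism $C$ directly to the $\tilde L$-equation in \eqref{eq:eqXi} and invoking uniqueness.

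For analyticity I would first note that, by the definition of the rays $l_\nu$ via \eqref{eq:alJ}, each quantity $\im\lambda(J_k-J_j)$ keeps a constant sign on the open sector $\Omega_\nu$. Hence the exponential kernels in \eqref{eq:Inteq} satisfy $|e^{-i\lambda(J_k-J_j)(x-y)}|\leq 1$ on the matching half-line of integration, uniformly for $\lambda$ in compact subsets of $\Omega_\nu$. Assuming $Q(\cdot,t)\in L^1(\mathbb{R})$, the iterates starting from $\xi_\nu^{(0)}=\openone$ and defined by $\xi_\nu^{(n+1)}=\openone+\mathcal{K}_\nu\xi_\nu^{(n)}$, where $\mathcal{K}_\nu$ is the integral operator on the right-hand side of \eqref{eq:Inteq}, form a Cauchy sequence in sup-norm with the standard $\|Q\|_{L^1}^n/n!$ majorization. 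Since each iterate depends holomorphically on $\lambda\in\Omega_\nu$ and the convergence is locally uniform in $\lambda$, Weierstrass's theorem yields analyticity of the limit $\xi_\nu$ on $\Omega_\nu$.

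For the covariance I would apply $C$ to the first line of \eqref{eq:eqXi}. The Mikhailov reduction \eqref{eq:Zh0} implies $C(Q)=Q$ and $C(J)=\omega J$, and, $C$ being an algebra automorphism, $C([J,\xi_\nu])=\omega[J,C(\xi_\nu)]$. This yields
\[
i\partial_x C(\xi_\nu(x,t,\lambda)) + Q\, C(\xi_\nu(x,t,\lambda)) - (\lambda\omega)[J, C(\xi_\nu(x,t,\lambda))] = 0,
\]
which is precisely the $\tilde L$-equation at spectral parameter $\lambda\omega$. Since $\omega=e^{2i\pi/h}$ rotates $\arg\lambda$ by $2\pi/h$, i.e.\ by the angular width of two consecutive sectors, $\lambda\in\Omega_\nu$ is equivalent to $\lambda\omega\in\Omega_{\nu+2}$; and $C(\openone)=\openone$ preserves the canonical normalization. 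Uniqueness of the Volterra problem on $\Omega_{\nu+2}$ then delivers \eqref{eq:RC}.

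The step I expect to be most delicate is verifying that the piecewise integration rule in \eqref{eq:Inteq} is genuinely $C$-covariant at the level of matrix entries. The automorphism $C$ acts on the standard basis by a signed permutation, so the labels $(k,j)$ get reshuffled to some $(k',j')$, and one must check that the sign of $\im\lambda(J_k-J_j)$ on $\Omega_\nu$ matches the sign of $\im(\lambda\omega)(J_{k'}-J_{j'})$ on $\Omega_{\nu+2}$ for the correspondingly relabelled entry, so that the $-\infty$ versus $+\infty$ prescriptions line up. This is a bookkeeping exercise against Tables~\ref{tab:2} and~\ref{tab:4}, but it is structurally forced by the $\mathbb{Z}_h$-symmetry; once confirmed, it ensures that the integral equation for $C(\xi_\nu(x,t,\lambda))$ is literally the one for $\xi_{\nu+2}(x,t,\lambda\omega)$, closing the argument.
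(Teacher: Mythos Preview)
Your proposal is correct and follows essentially the same line as the paper's own ``idea of the proof'': both hinge on the fact that the exponential kernels in \eqref{eq:Inteq} are bounded throughout $\Omega_\nu$, and both derive \eqref{eq:RC} directly from the Mikhailov reduction \eqref{eq:Zh0}. The only technical difference is in how analyticity is extracted: the paper differentiates the integral equations termwise in $\lambda$ and argues that the resulting polynomial factors in $x,y$ are still dominated by the decaying exponentials, hence all $\lambda$-derivatives exist; you instead run the Neumann iteration and invoke Weierstrass on the locally uniform limit of holomorphic iterates. Your route is slightly cleaner (existence of all derivatives does not by itself give analyticity without additional growth control), while the paper's route makes the smoothness in $\lambda$ more explicit; either is standard for Volterra problems of this type. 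Your closing remark on checking entry-by-entry that the $\pm\infty$ prescriptions in \eqref{eq:Inteq} are mapped correctly under $C$ is exactly the content the paper leaves implicit in the phrase ``follows directly from Mikhailov reduction condition''.
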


\begin{proof} [Idea of the proof]
The solutions of the conditions $  \im \lambda (J_k - J_j) \mathop{\leq}\limits_{\nu} 0$ for $\lambda \in \Omega_\nu $ in the case of $A_5^{(1)}$ are listed in Table \ref{tab:3} as the subsets  $\delta_\nu^+$. All other roots of $A_5$ for $\lambda \in \Omega_\nu$ will satisfy the condition
$  \im \lambda (J_k - J_j) \mathop{>}\limits_{\nu} 0$. As a result it is easy to see that the exponential factors in eq. (\ref{eq:Inteq}) will decrease exponentially for all $x$ and $\lambda \in \Omega_p$. In particular this means that the integrals will be convergent for any $\lambda \in \Omega_\nu$, which guarantees the existence of $\xi_\nu (x,t,\lambda)$. Let us now consider the integral equations for the derivatives $\frac{\partial^s }{ \partial \lambda^s} \xi_\nu (x,t,\lambda)$. The integrands of these equations will contain, besides the exponential factors, also polynomial factors in $x$ and $y$ of order $s$. Again the decreasing exponential factors ensure the convergence of the integrals in the right hand side, which means  that $\xi_\nu (x,t,\lambda)$ will allow derivatives of all orders with respect to $\lambda$ in the sector $\Omega_\nu$. This is one of the basic properties of the analytic functions.

Finally, equation (\ref{eq:RC}) follows directly from Mikhailov reduction condition (\ref{eq:Zh0}).
\end{proof}

The corresponding generalized RHP can be written down as follows:
\begin{equation}\label{eq:rhp2a}\begin{aligned}
\xi_{\nu}(x,t,\lambda) = \xi_{\nu -1}(x,t,\lambda) G_\nu(x,t,\lambda), \qquad G_\nu(x,t,\lambda) = e^{-i\lambda Jx}  G_{\nu 0}(t,\lambda)  e^{i\lambda Jx} ,
\end{aligned}\end{equation}
where $ \lambda \in l_\nu $ and the rays $l_\nu$ are determined by $\arg \lambda = \nu \pi /h$, $\nu= 0, \dots 2h-1$ where $h$ is the Coxeter number. The sector $\Omega_\nu $ is determined by the rays $l_\nu$ and $l_{\nu+1}$, see Figure \ref{fig:2}. In fact A. V. Mikhailov, developing his ideas on the reduction groups in \cite{Mikh},  came very close to such formulation of the RHP.

\begin{figure}
  \includegraphics[width=0.45\textwidth]{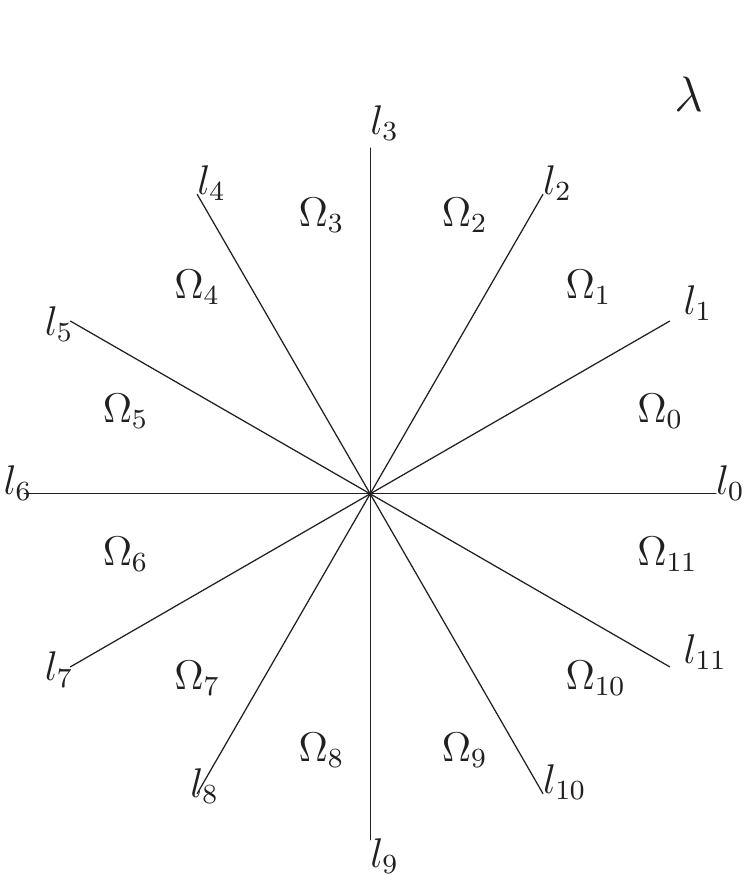}\quad
    \includegraphics[width=0.45\textwidth]{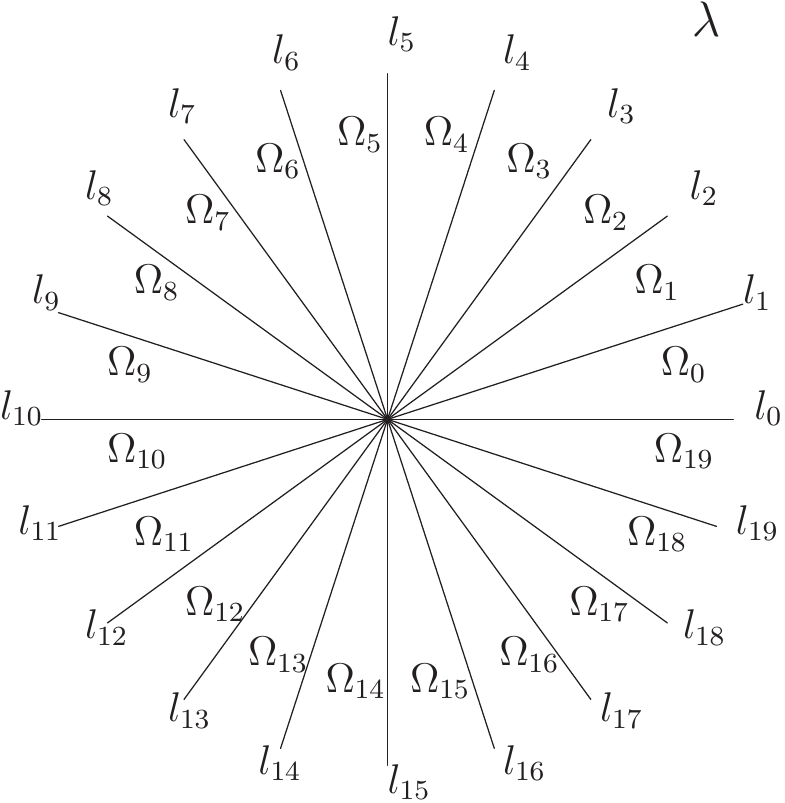}\\
  \caption{Continuous spectrum of the Lax operators and contours of the RHP for  $A_5^{(1)}$ (left panel) and  $A_5^{(2)}$ (right panel).}\label{fig:2}
\end{figure}

\begin{table}
  \centering
  \begin{tabular}{|c|c|c|}
    \hline
    $l_\nu$ & $l_0 \cup l_6$ & $l_1 \cup l_7$ \\
    $\alpha$ & $\pm (e_1-e_4), \pm (e_2-e_3), \pm (e_5-e_6)$ & $\pm (e_1-e_3), \pm (e_4-e_6)$ \\
    \hline
    $l_\nu$ & $l_2 \cup l_8$ &  $l_3 \cup l_9$ \\
    $\alpha$ &  $\pm (e_1-e_2), \pm (e_3-e_6), \pm (e_4-e_5)$ & $\pm (e_2-e_6), \pm (e_3-e_5)$ \\
    \hline
    $l_\nu$ & $l_4 \cup l_{10}$ &  $l_5 \cup l_{11}$ \\
    $\alpha$ &  $\pm (e_1-e_6), \pm (e_2-e_5), \pm (e_3-e_4)$ & $\pm (e_1-e_5), \pm (e_2-e_4)$ \\
      \hline
  \end{tabular}
  \medskip
  \caption{The roots of $A_5^{(1)}$ related to the rays $l_\nu$, $\nu=0, \dots, 11$, see the left panel of Figure \ref{fig:2} }\label{tab:2}
\end{table}

\begin{table}
  \centering
  \begin{tabular}{|c|c|c|}
    \hline
    $\Omega_\nu$ & $\delta_\nu^+ $ & $\delta_\nu^-$ \\ \hline
    $\Omega_0$ & $ (e_1-e_4), (e_2-e_3), - (e_5-e_6)$ & $(e_1-e_5),  (e_2-e_4)$ \\
    \hline
    $\Omega_1$ & $ (e_1-e_5),  (e_2-e_4) $ & $ (e_1-e_6), (e_2-e_5), (e_3-e_4)$ \\
    \hline
    $\Omega_2$ & $ (e_1-e_6), (e_2-e_5),  (e_3-e_4)$ & $(e_2-e_6),  (e_3-e_5)$ \\
    \hline
    $\Omega_3$ & $ (e_2-e_6),  (e_3-e_5) $ & $ -(e_1-e_2), (e_3-e_6), (e_4-e_5)$ \\
    \hline
    $\Omega_4$ & $ -(e_1-e_2), (e_3-e_6),  (e_4-e_5)$ & $-(e_1-e_3),  (e_4-e_6)$ \\
    \hline
    $\Omega_5$ & $ -(e_2-e_3),  (e_4-e_6) $ & $ -(e_2-e_3), (e_5-e_6), -(e_1-e_4)$ \\
      \hline
  \end{tabular}
    \medskip
  \caption{The root subsystems $\delta_\nu^\pm $ of $A_5^{(1)}$ related to the sectors $\Omega_\nu$, $\nu=0, \dots, 11$, see the left panel of Figure \ref{fig:2} }\label{tab:3}
\end{table}

\begin{table}
  \centering
  \begin{tabular}{|c|c|c|c|c|}
    \hline
    $l_\nu$ & $l_0 \cup l_{10}$ & $l_1 \cup l_{11}$ & $l_2 \cup l_{12}$ & $l_3 \cup l_{13}$ \\
    $\alpha$ & $\pm (e_3-e_4) $ & $\pm (e_1-e_2), \pm (e_3-e_5)$ &
    $\pm (e_4-e_5) $ & $\pm (e_2-e_5), \pm (e_3-e_6)$ \\ \hline
    $l_\nu$ & $l_4 \cup l_{14}$ & $l_5 \cup l_{15}$ & $l_6 \cup l_{16}$ & $l_7 \cup l_{17}$ \\
    $\alpha$ & $\pm (e_2-e_4) $ & $\pm (e_1-e_5), \pm (e_2-e_6)$ &
    $\pm (e_4-e_6) $ & $\pm (e_1-e_6), \pm (e_2-e_3)$ \\ \hline
    $l_\nu$ & $l_8 \cup l_{18}$ & $l_9 \cup l_{19}$ &  &  \\
    $\alpha$ & $\pm (e_1-e_4) $ & $\pm (e_1-e_3), \pm (e_5-e_6)$ &
     &  \\
      \hline
  \end{tabular}
    \medskip
  \caption{The roots of $A_5$ related to the rays $l_\nu$, $\nu=0, \dots, 19$ with $J= \diag(\omega_2, \omega_2^3, -1, 0, \omega_2^9, \omega_2^7) $, see the right panel of Figure \ref{fig:2} and Remark \ref{rem:2}. }\label{tab:4}
\end{table}

\begin{remark}\label{rem:2}
 For technical reasons in Tables \ref{tab:4} and \ref{tab:5} we list the roots of $A_5$. Their root vectors $E_{ij}$  can easily be expressed in terms of the root vectors of $A_5^{(2)}$ taking into account the relations (\ref{eq:V2b}) from Appendix B. Indeed,
 \begin{equation}\label{eq:Eij}\begin{aligned}
  E_{ij} =&  \frac{1}{2} (\tilde{\mathcal{E}}^+_{ij}+ \tilde{\mathcal{E}}^-_{ij}), \qquad
 E_{i\bar{j}} =&  \frac{1}{2} (\tilde{\mathcal{E}}^+_{i\bar{j}}- \tilde{\mathcal{E}}^-_{i\bar{j}}), \qquad E_{j\bar{j}} =&  \tilde{\mathcal{E}}^+_{j\bar{j}} ,
 \end{aligned}\end{equation}
where $1 \leq i <j \leq 3$ and $\bar{k} = 7-k$.
\end{remark}

\begin{table}
  \centering
  \begin{tabular}{|c|c|c||c|c|c|}
    \hline
    $\Omega_\nu$ & $\delta_\nu^+ $ & $\delta_\nu^-$ & $\Omega_\nu$ & $\delta_\nu^+ $ & $\delta_\nu^-$ \\ \hline
    $\Omega_0$ & $ (e_1-e_4)$ & $ -(e_1-e_3), - (e_5-e_6)$ & $\Omega_1$ & $(e_1-e_5),  (e_2-e_4)$ & $- (e_1 -e_4)$ \\    \hline
    $\Omega_2$ & $ -(e_1-e_4)$ & $ -(e_1-e_6), - (e_2-e_3)$ & $\Omega_3$ & $-(e_1-e_6), -(e_2-e_3)$ & $- (e_4 -e_6)$ \\     \hline
    $\Omega_4$ & $ -(e_4-e_6)$ & $ -(e_1-e_5), - (e_2-e_6)$ & $\Omega_5$ & $-(e_1-e_5), -(e_2-e_6)$ & $- (e_2 -e_4)$ \\
    $\Omega_6$ & $ -(e_2-e_4)$ & $ -(e_2-e_5), - (e_3-e_6)$ & $\Omega_7$ & $-(e_2-e_5), -(e_3-e_6)$ & $- (e_4 -e_5)$ \\ \hline
    $\Omega_8$ & $ -(e_4-e_5)$ & $ (e_1-e_2), - (e_3-e_5)$ & $\Omega_9$ & $(e_1-e_2), -(e_3-e_5)$ & $- (e_3 -e_4)$ \\ \hline
  \end{tabular}
    \medskip
  \caption{The root subsystems $\delta_\nu^\pm $ of $A_5$ related to the sectors $\Omega_\nu$, $\nu=0, \dots, 9$, see the left panel of Figure \ref{fig:2} and Remark \ref{rem:2}. }\label{tab:5}
\end{table}

It is obvious that all the information about the scattering data of $L$ (or $\tilde{L}$) must be hidden in the sewing functions $G_\nu (x,t,\lambda)$. For the Lax operators we are considering it is not possible to introduce Jost solutions without imposing additional severe restrictions on $Q(x,t)$, such as tending to 0 for $x \to \pm \infty$ faster tan any exponential $e^{-c|x|}$ for any positive $c$, or even assuming that $Q(x,t)$ has finite support. However, we can use the limits of $\xi_\nu (x,t,\lambda)$ for $x \to \pm \infty$ and $\lambda \in l_\nu$.
They are given by \cite{GYa, SIAM*14}:
\begin{equation}\label{eq:limchi}\begin{aligned}
\lim_{x\to -\infty} e^{i\lambda J x} \chi_\nu (x,t,\lambda) =&  S_\nu^+(t,\lambda), &\quad \lambda &\in l_\nu e^{ i0} , \\
\lim_{x\to \infty} e^{i\lambda J x} \chi_\nu (x,t,\lambda) =&  T_\nu^-(t,\lambda) D_\nu^+(\lambda), &\quad \lambda &\in l_\nu e^{ i0} , \\
\lim_{x\to -\infty} e^{i\lambda J x} \chi_{\nu-1} (x,t,\lambda) =&  S_\nu^-(t,\lambda), &\quad \lambda &\in l_\nu e^{-i0} , \\
\lim_{x\to \infty} e^{i\lambda J x} \chi_{\nu -1} (x,t,\lambda) =&  T_\nu^+(t, \lambda) D_\nu^-(\lambda), &\quad \lambda &\in l_\nu e^{-i0} ,
\end{aligned}\end{equation}
where  $\nu =0,1,\dots, 2h-1$ and $S_\nu^\pm $, $T_\nu^\pm $ and $D_\nu^\pm $ have the form:
\begin{equation}\label{eq:STDpm}\begin{aligned}
S_\nu^\pm (\lambda) =&  \exp \left( \sum_{\alpha \in \delta_\nu^+}^{} s_\alpha^\pm (\lambda) E_{\pm \alpha} \right), & \quad T_\nu^\pm (\lambda) =&  \exp \left( \sum_{\alpha \in \delta_\nu^+}^{} \tau_\alpha^\pm (\lambda) E_{\pm \alpha} \right), \\
D_\nu^\pm (\lambda) =&  \exp \left( \sum_{\alpha \in \delta_\nu^+}^{} d_{\nu, \alpha}^\pm (\lambda) H_{ \alpha} \right),
\end{aligned}\end{equation}

\begin{remark}\label{rem:1}
Formally one can introduce an analogue of the scattering matrix for each pair of rays $l_\nu \cup l_{h+\nu}$ as follows:
\begin{equation}\label{eq:Tnu}\begin{aligned}
 T_\nu(t,\lambda) = T_\nu^-(t,\lambda) D_\nu^+(\lambda) \hat{S}^+_\nu (t,\lambda) = T_\nu^+(t,\lambda) D_\nu^-(\lambda) \hat{S}^-_\nu (t,\lambda), \qquad \lambda \in l_\nu.
\end{aligned}\end{equation}
Note that $T_\nu (t,\lambda)$ belongs to the subgroup $\mathcal{G}_\nu \subset SL(6)$ whose root system is $\delta_\nu^+ \cup \delta_\nu^-$. Then $T_\nu^\pm (t,\lambda)$, $S_\nu^\pm (t,\lambda)$ and $D_\nu^\pm (\lambda)$ can be viewed as the Gauss factors of $T_\nu (t,\lambda)$. Another peculiar fact here is, that to each sector $\Omega_\nu$ we relate specific ordering of the root systems, i.e. specific choice of the positive and negative roots, see \cite{GYa, SIAM*14}.

\end{remark}

\begin{lemma}\label{lem:2}
i) The $t$-dependence of the scattering data for the mKdV equations is given by:
\begin{equation}\label{eq:dSpmt}\begin{aligned}
i \frac{\partial T_\nu^\pm }{ \partial t} - \lambda^3 [K, T_\nu^\pm (t,\lambda)] =& 0, & \qquad
i \frac{\partial S_\nu^\pm }{ \partial t} - \lambda^3 [K, S_\nu^\pm (t,\lambda)] =& 0, \\
i \frac{\partial D_\nu^\pm }{ \partial t} =& 0, &\qquad
i \frac{\partial T_\nu }{ \partial t} - \lambda^3 [K, T_\nu (t,\lambda)] =& 0.
\end{aligned}\end{equation}
ii) The function $ D_\nu^+(\lambda)$ (resp. $ D_\nu^-(\lambda)$) is an analytic function for
$\lambda \in \Omega_\nu$ (resp. for $\lambda \in \Omega_{\nu-1}$). They are generating functionals of the integrals of motion for the mKdV hierarchy.
\end{lemma}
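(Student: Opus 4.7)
The plan is to exploit that each FAS $\chi_\nu(x,t,\lambda)$ satisfies the second Lax equation $M\chi_\nu = -\lambda^3 \chi_\nu K$, which can be rewritten as $i\partial_t \chi_\nu = -V(x,t,\lambda)\chi_\nu + \lambda^3[K,\chi_\nu]$. Because we take $Q(x,t)\to 0$ sufficiently fast as $|x|\to\infty$, all $V_p(x,t)$ also vanish at spatial infinity, so on the boundary rays one has $i\partial_t \chi_\nu \sim \lambda^3[K,\chi_\nu]$ in the limit. A crucial algebraic fact is that $[K,J]=0$ (obvious in the $A_5^{(2)}$ case since $K$ is proportional to $J^3$; for $A_5^{(1)}$ it follows from the commutation relations in \eqref{eq:Jsk} applied to $J=J_0^{(1)}$ and $K=J_0^{(3)}$), so $[K,e^{i\lambda Jx}]=0$. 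Thus conjugation by $e^{i\lambda Jx}$ commutes with $[K,\cdot]$ and the limits in \eqref{eq:limchi} inherit the evolution $i\partial_t X - \lambda^3[K,X]=0$.

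For $S_\nu^\pm$ the argument is direct: taking $x\to-\infty$ in the above identity gives $i\partial_t S_\nu^\pm = \lambda^3[K,S_\nu^\pm]$. For the combinations $T_\nu^{\mp} D_\nu^{\pm}$ obtained by the $x\to+\infty$ limits I would separate the equation using the uniqueness of the Gauss factorization. Writing $A=T_\nu^-$ and $D=D_\nu^+$ and using $[K,D]=0$ (since $D$ lies in the Cartan subgroup), I expand
\begin{equation*}
(i\partial_t A)\,D + A\,(i\partial_t D) = \lambda^3\,[K,A]\,D,
\end{equation*}
then multiply by $A^{-1}$ on the left and $D^{-1}$ on the right, obtaining
\begin{equation*}
A^{-1}(i\partial_t A) + (i\partial_t D)D^{-1} = \lambda^3\,A^{-1}[K,A].
\end{equation*}
The term $A^{-1}(i\partial_t A)$ lies in the nilpotent subalgebra $\mathrm{span}\{E_{-\alpha}\,:\,\alpha\in\delta_\nu^+\}$, the term $A^{-1}[K,A]= \mathrm{Ad}(A^{-1})K - K$ lies in the same nilpotent subalgebra (since $A$ is unipotent and $K$ is Cartan), while $(i\partial_t D)D^{-1}$ is in the Cartan subalgebra. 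By directness of the Cartan-plus-nilpotent decomposition the Cartan part must vanish, giving $i\partial_t D_\nu^+ = 0$, and the remaining nilpotent part yields $i\partial_t T_\nu^- = \lambda^3[K,T_\nu^-]$. The same argument applied to the opposite limit and the opposite ordering establishes the equations for $T_\nu^+$ and $D_\nu^-$. The evolution of $T_\nu$ then follows by combining the Gauss factors as in \eqref{eq:Tnu}.

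For part (ii), analyticity of $D_\nu^+(\lambda)$ on $\Omega_\nu$ follows from Theorem~\ref{thm:1}: $\chi_\nu$ is analytic in $\Omega_\nu$, so any of its principal minors (along the ordering determined by $\delta_\nu^+$) is analytic there. These principal minors are precisely the exponentials of the Cartan coordinates $d_{\nu,\alpha}^+(\lambda)$ in \eqref{eq:STDpm}, so $D_\nu^+$ extends analytically from the boundary ray $l_\nu e^{+i0}$ into $\Omega_\nu$. The case of $D_\nu^-$ is parallel with $\Omega_{\nu-1}$. The generating-functional statement is then standard: expanding $\ln D_\nu^\pm(\lambda)$ asymptotically in powers of $\lambda^{\pm 1}$ via the recursion relations of Section~2.3 produces local densities in $Q$ and its $x$-derivatives, and the identity $i\partial_t D_\nu^\pm = 0$ proved in part (i) forces every coefficient to be conserved.

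The main obstacle is the Cartan/nilpotent splitting argument: one must verify that under the reductions imposed by $\mathbb{Z}_h$ the subalgebra $\mathfrak{g}_\nu$ identified in Lemma~\ref{lem:1} is compatible with the chosen Gauss ordering, so that the factorization is actually unique and the projection onto Cartan vs.\ root parts is well-defined; this is exactly what ensures that $D_\nu^\pm$ picks up a genuine Cartan factor and not a spurious piece that would spoil the $t$-independence.
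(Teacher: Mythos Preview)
Your argument is correct and follows the same route as the paper: multiply the $M$-equation by $e^{i\lambda Jx}$, use that $V(x,t,\lambda)\to 0$ and $[K,J]=0$, and read off the evolution of the asymptotic data \eqref{eq:limchi}; for part~(ii) the paper simply cites \cite{SIAM*14} and the asymptotic expansion \eqref{eq:dnu}. Your explicit Cartan/nilpotent splitting to separate $T_\nu^\mp$ from $D_\nu^\pm$ is a welcome elaboration of what the paper leaves as ``we easily obtain''.
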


\begin{proof}
i) Let us multiply the second equation in (\ref{eq:eqXi}) by $e^{i \lambda J x}$ and take the limits for $x \to \infty$ and $x \to -\infty$. Takin into account eq. (\ref{eq:limchi}) and the fact, that $Q(x,t)$ and $V(x,t,\lambda)$ vanish fast enough for $x\to \pm \infty$ we easily obtain the equations (\ref{eq:dSpmt}).

\medskip
ii) The analyticity properties of $D_\nu^\pm (\lambda)$ were proven in \cite{SIAM*14} for generic Kac-Moody algebras. As generating functionals of the integrals of motion it is more convenient to consider $d_{\nu,\alpha}^\pm (\lambda)$. Their asymptotic expansions:
\begin{equation}\label{eq:dnu}\begin{aligned}
 d_{\nu,\alpha}^\pm (\lambda) = \sum_{p=1}^{\infty} \lambda^{-p} I_{\nu, \alpha}^{(p)}
\end{aligned}\end{equation}
provide integrals of motion $I_{\nu, \alpha}^{(p)}$ whose densities are local in $Q(x,t)$, i.e. depend only on $Q(x,t)$ and its $x$-derivatives.

\end{proof}

\section{The minimal set of scattering data}

Here we reformulate the basic results of \cite{GYa, SIAM*14} for the specific Kac-Moody algebras used above. It is natural to expect that these sets will be expressed in terms of the sewing functions of the RHP. Our considerations will be relevant only for the cases when the solution of the RHP is regular. This means that the spectra of the corresponding Lax operators do not contain discrete eigenvalues.

\subsection{The $A_5^{(1)}$ case}

We introduce two minimal sets of scattering data for the $A_5^{(1)}$ Kac-Moody algebra as follows, see Table \ref{tab:3}:
\begin{equation}\label{eq:T12}\begin{aligned}
\mathcal{T}_1 &\equiv \{ s_{0;\alpha}^\pm (\lambda,t),  \quad \alpha\in \delta_0^+, \; \lambda \in l_0\} \cup  \{ s_{1;\alpha}^\pm (\lambda,t), \quad \alpha\in \delta_1^+, \; \lambda \in l_1\}, \\
\mathcal{T}_2 &\equiv \{ \tau_{0;\alpha}^\pm (\lambda,t),  \quad \alpha\in \delta_0^+, \; \lambda \in l_0\} \cup \{\tau_{1;\alpha}^\pm (\lambda,t),\quad \alpha\in \delta_1^+, \; \lambda \in l_1\}.
\end{aligned}\end{equation}

\begin{theorem}\label{thm:A51}
Let us assume that the potential of the Lax operator (\ref{eq:A51}) $Q(x,t)$ is a Schwartz-type function of $x$  and is such that the corresponding RHP is regular. Then each of the minimal sets $\mathcal{T}_i$, $i=1,2$ determines uniquely:

 i) all  sewing functions $G_\nu (x,t,\lambda)$ for $\nu =0,1, \dots , 11$;

 ii) all scattering matrices $T_\nu$, $\nu =0,1, \dots , 11$;

  iii) $\mathcal{T}_1 \simeq \mathcal{T}_2 $;

  iv) the potential $Q(x,t)$.
\end{theorem}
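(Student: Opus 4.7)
The plan is to combine three structural features established earlier in the paper: (a) the Mikhailov reduction identity (\ref{eq:RC}), which propagates scattering data across rays in steps of two; (b) the Gauss factorisation (\ref{eq:Tnu}), which expresses each scattering matrix $T_\nu$ in terms of the nilpotent pieces $S_\nu^\pm$, $T_\nu^\pm$ and the block-diagonal piece $D_\nu^\pm$; and (c) the sectorial analyticity of $D_\nu^\pm$ in $\Omega_\nu$, respectively $\Omega_{\nu-1}$ (Lemma \ref{lem:2}), which permits reconstruction of $D_\nu^\pm$ from boundary data by dispersion relations.

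To establish items (i) and (ii) starting from $\mathcal{T}_1$, I would first read off the nilpotent factors $S_0^\pm$ and $S_1^\pm$ directly from (\ref{eq:STDpm}). Rewriting the identity $T_\nu^-D_\nu^+\hat S_\nu^+ = T_\nu^+D_\nu^-\hat S_\nu^-$ from (\ref{eq:Tnu}) isolates the combination $D_\nu^+\hat D_\nu^-$ on $l_\nu$, yielding the boundary jump of $D_\nu^\pm$ in terms of the known $S_\nu^\pm$ and the still unknown $T_\nu^\pm$. Because the subalgebra $\mathfrak{g}_\nu$ attached to each ray pair is a direct sum of commuting $sl(2)$ blocks (Lemma \ref{lem:1}), this jump equation decouples into independent scalar Riemann--Hilbert problems, one per block. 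Under the regularity assumption $D_\nu^\pm\to\openone$ at infinity with no zeros in the respective sectors, each scalar problem is uniquely solved by a Cauchy integral, producing $D_0^\pm,D_1^\pm$ and then $T_0^\pm,T_1^\pm$ by back-substitution. Assembling these pieces gives the full Gauss data on $l_0,l_1$, and hence the sewing functions $G_0,G_1$ and the scattering matrices $T_0,T_1$. Iterating the Mikhailov action $C,C^2,\ldots,C^5$ according to (\ref{eq:RC}) propagates this information to every ray $l_\nu$ with $\nu=2,\ldots,11$. Starting from $\mathcal{T}_2$ the argument is symmetric, with the roles of $S$ and $T$ interchanged. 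Item (iii) is an immediate corollary: both minimal sets reconstruct the same $T_0,T_1$ and conversely are read off from them, yielding a canonical bijection $\mathcal{T}_1\leftrightarrow\mathcal{T}_2$.

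For item (iv), once all sewing functions $G_\nu$ are known, the canonically normalised generalised RHP (\ref{eq:rhp2a}) has a unique solution $\{\xi_\nu(x,t,\lambda)\}$ tending to $\openone$ as $\lambda\to\infty$ by the Zakharov--Shabat theorem cited after (\ref{eq:rhp2}). Substituting the asymptotic expansion
\begin{equation*}
\xi_\nu(x,t,\lambda) = \openone + \lambda^{-1}\xi^{(1)}(x,t) + O(\lambda^{-2}), \qquad \lambda\to\infty,
\end{equation*}
into the first line of (\ref{eq:eqXi}) and matching the $\lambda^0$ coefficient gives $Q(x,t)=[J,\xi^{(1)}(x,t)]$, which determines $Q$ uniquely because $\ad_J$ is invertible on $\mathfrak{g}^{(0)}$ in the sense used in (\ref{eq:V2Ic}).

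The main obstacle I anticipate is the decoupling and explicit solution of the scalar dispersion relations for $D_\nu^\pm$: one must verify that the jump data derived from $\mathcal{T}_i$ decay fast enough along each $l_\nu$ for Schwartz-class $Q$, that the resulting Cauchy integrals are well defined, and that the reconstructed $D_\nu^\pm$ meet the correct normalisation at infinity in their respective sectors. Once this technical step is settled, the remaining assertions follow from the algebraic structure of the Mikhailov reduction and the standard RHP theory developed in Section 3.
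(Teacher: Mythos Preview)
Your outline matches the paper's own ``idea of the proof'' almost step for step: read off $S_0^\pm,S_1^\pm$ (resp.\ $T_0^\pm,T_1^\pm$) from (\ref{eq:STDpm}), propagate by the Coxeter action (\ref{eq:RC})/(\ref{eq:Snu0}), reconstruct $D_\nu^\pm$ from their sectorial analyticity and jump relations, pass between $\mathcal{T}_1$ and $\mathcal{T}_2$ by the Gauss factorisation (\ref{eq:Gnu0})--(\ref{eq:4.7}), and recover $Q$ from the large-$\lambda$ asymptotics of the RHP solution (your $Q=[J,\xi^{(1)}]$ is exactly (\ref{eq:Qxt})).

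Two small points are worth tightening. First, the jump of $D_\nu^\pm$ does \emph{not} involve the ``still unknown $T_\nu^\pm$'': because each $\mathfrak{g}_\nu$ is a direct sum of commuting $sl(2)$ blocks (Lemma~\ref{lem:1}), the diagonal part of (\ref{eq:Tnu}) gives directly $d_{\nu;\alpha}^+-d_{\nu;\alpha}^-=\ln(1-s_{\nu,\alpha}^+s_{\nu,-\alpha}^-)$ as in (\ref{eq:dpmT1}), purely in terms of $\mathcal{T}_1$; no back-substitution is needed. Second, your ordering places the Mikhailov propagation \emph{after} the scalar dispersion relations for $D_0^\pm,D_1^\pm$, but $D_\nu^+$ is analytic only in the single sector $\Omega_\nu$, so a Cauchy integral using jump data on $l_0$ alone cannot pin it down. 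The paper avoids this by propagating $S_\nu^\pm$ to all twelve rays \emph{first}, so that the jump (\ref{eq:dpmT1}) is known on every $l_\nu$ before one appeals to analyticity; you should reorder accordingly (or, equivalently, invoke the $\mathbb{Z}_6$-covariance of the $D_\nu^\pm$ themselves to reduce the sectorial problem to one on a fundamental pair of rays).
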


\begin{proof}[Idea of the proof]
The fact that the solution of the RHP is regular means that the corresponding Lax operator $L$ has no discrete eigenvalues. In other words the functions $D_\nu^\pm (\lambda)$ have neither zeroes nor poles in their regions of analyticity.

\begin{itemize}

   \item[i)] Let us now demonstrate that the sets $ \mathcal{T}_k$, $k=0, 1$ allow us to construct all $S_\nu^\pm (\lambda,t)$ and $T_\nu^\pm (\lambda,t)$. Obviously:
\begin{equation}\label{eq:S0S1}\begin{aligned}
S_0^\pm =&  \exp \left( s_{0;14}^\pm E_{\pm (e_1-e_4)} + s_{0;23}^\pm E_{\pm (e_2-e_3)} + s_{0;56}^\pm E_{\mp (e_5-e_6)} \right), \\
T_0^\pm =&  \exp \left( \tau_{0;14}^\pm E_{\pm (e_1-e_4)} + \tau_{0;23}^\pm E_{\pm (e_2-e_3)} + \tau_{0;56}^\pm E_{\mp (e_5-e_6)} \right), \\
S_1^\pm =&  \exp \left( s_{1;13}^\pm E_{\pm (e_1-e_3)} + s_{1;46}^\pm E_{\pm (e_4-e_6)} \right), \\ T_1^\pm =&  \exp \left( \tau_{1;13}^\pm E_{\pm (e_1-e_3)} + \tau_{1;46}^\pm E_{\pm (e_4-e_6)} \right).
\end{aligned}\end{equation}

Note that the reduction condition (\ref{eq:RC}) on the FAS reflects also on their asymptotics for $x\to \pm \infty$ as follows:
\begin{equation}\label{eq:Snu0}\begin{aligned}
C^\nu(S_{0}^\pm(x,t,\lambda)) =&  S_{2\nu }^\pm (x,t,\lambda \omega^{\nu}), &\quad
C^\nu(S_{1}^\pm(x,t,\lambda)) =&  S_{2\nu +1}^\pm (x,t,\lambda \omega^{\nu}), \\
C^\nu(T_{0}^\pm(x,t,\lambda)) =&  T_{2\nu }^\pm (x,t,\lambda \omega^{\nu}), &\quad
C^\nu(T_{1}^\pm(x,t,\lambda)) =&  T_{2\nu +1}^\pm (x,t,\lambda \omega^{\nu}),
\end{aligned}\end{equation}
for $\nu = 0,1 \dots, 11$. Thus we have recovered all $S_\nu^\pm (\lambda,t)$ and $T_\nu^\pm (\lambda,t)$.

   \item[ii)] It remains to recover $D_\nu^+ (\lambda)$ and $D_\nu^- (\lambda)$ (or $d_{\nu,\alpha}^\pm  (\lambda)$)  using the fact that they are analytic functions of $\lambda$ in the sector $\Omega_\nu$ and $\Omega_{\nu-1}$ respectively. In addition from eq. (\ref{eq:Tnu}) there follows, that:
\begin{equation}\label{eq:dpmT1}\begin{aligned}
d^+_{\nu; \alpha} - d^-_{\nu;\alpha} =&  \ln (1- s^+_{\nu,\alpha}  s^-_{\nu,-\alpha}), &\qquad \lambda \in  & l_\nu, &\quad \alpha \in & \delta_\nu^+ \\
d^+_{\nu; \alpha} - d^-_{\nu;\alpha} =&  \ln (1- \tau^+_{\nu,\alpha}  \tau^-_{\nu,\alpha}), &\qquad \lambda \in & l_\nu, &\quad \alpha \in  & \delta_\nu^+.
\end{aligned}\end{equation}
for $\nu = 0,1 \dots, 11$, which follow from eqs. (\ref{eq:Tnu}). In particular for $k=0, 1$:
\begin{equation}\label{eq:dpmT2}\begin{aligned}
d^+_{0; \alpha} - d^-_{0;\alpha} =&  \ln (1- s^+_{0,\alpha}  s^-_{0,-\alpha}), &\quad \lambda &\in l_0, &\quad \alpha &\in \{ e_1-e_4, e_2-e_3, -(e_5-e_6) \}; \\
d^+_{1; \alpha} - d^-_{1;\alpha} =&  \ln (1- s^+_{1,\alpha}  s^-_{1,\alpha}), &\quad \lambda &\in l_1, &\quad \alpha &\in \{  e_1 -e_3, e_4-e_6\},
\end{aligned}\end{equation}
and similar expressions in terms of $\tau^+_{k,\alpha}$ and $\tau^-_{k,-\alpha}$, $k=0, 1$.

   \item[iii)] Comparing the asymptotics (\ref{eq:limchi}) of the FAS for $x\to\pm \infty$ we easily find that the sewing functions $G_{\nu,0} $ in (\ref{eq:rhp2a}) are given by:
\begin{equation}\label{eq:Gnu0}\begin{aligned}
G_{k,0}(\lambda ,t) = \hat{S}_k^-(\lambda, t) S_k^+(\lambda,t) = \hat{D}_k^-(\lambda)  \hat{T}_k^+(\lambda, t) T_k^-(\lambda,t) D_k^+(\lambda ), \quad \lambda \in l_k, \; k=0,1.
\end{aligned}\end{equation}
Thus we know the left hand side of the relation:
\begin{equation}\label{eq:4.7}\begin{aligned}
D_k^-(\lambda ) G_{k,0} (\lambda,t) \hat{D}_k^+(\lambda,t) =  \hat{T}_k^+(\lambda, t) T_k^-(\lambda,t), \qquad k = 0, 1,
\end{aligned}\end{equation}
and the construction of $T_k^\pm (\lambda,t)$ reduces to decomposing the left hand side of (\ref{eq:4.7}) into Gauss factors, which has unique solution. This means that knowing $\mathcal{T}_1$ we can recover  $\mathcal{T}_2$. Quite analogously one can prove that knowing  $\mathcal{T}_2$ we can uniquely recover  $\mathcal{T}_1$.

   \item[iv)] The RHP has unique regular solution. Suppose we have constructed the solution $\xi_\nu (x,t, \lambda)$ in the sector $\Omega_\nu$. Then we recover the potential from the well known relation:
   \begin{equation}\label{eq:Qxt}\begin{aligned}
    Q(x,t) = \lim_{\lambda \to \infty} \lambda \left (J - \xi_\nu J \xi_\nu^{-1} (x,t,\lambda) \right).
   \end{aligned}\end{equation}
   This result does not depend on $\nu$ due to the reduction condition (\ref{eq:Zh0}) and to the fact that $C(Q(x,t)) = Q(x,t)$.

 \end{itemize}

\end{proof}

\subsection{The $A_5^{(2)}$ case}

We introduce two minimal sets of scattering data for the $A_5^{(2)}$ Kac-Moody algebra as follows, see Table \ref{tab:5}:
\begin{equation}\label{eq:T120}\begin{aligned}
\mathcal{T}_1 &\equiv \{ s_{0;\alpha}^\pm (\lambda,t),  \quad \alpha\in \delta_0^+, \; \lambda \in l_0\} \cup  \{ s_{1;\alpha}^\pm (\lambda,t), \quad \alpha\in \delta_1^+, \; \lambda \in l_1\}, \\
\mathcal{T}_2 &\equiv \{ \tau_{0;\alpha}^\pm (\lambda,t),  \quad \alpha\in \delta_0^+, \; \lambda \in l_0\} \cup \{\tau_{1;\alpha}^\pm (\lambda,t),\quad \alpha\in \delta_1^+, \; \lambda \in l_1\}.
\end{aligned}\end{equation}

\begin{theorem}\label{thm:A52}
Let us assume that the potential of the Lax operator (\ref{eq:A51}) $Q(x,t)$ is a Schwartz-type function of $x$  and is such that the corresponding RHP is regular. Then each of the minimal sets $\mathcal{T}_i$, $i=1,2$ determines uniquely:

i) all  sewing functions $G_\nu (x,t,\lambda)$ for $\nu =0,1, \dots , 19$;

ii) all scattering matrices $T_\nu$, $\nu =0,1, \dots , 19$;

iii) $\mathcal{T}_1 \simeq \mathcal{T}_2 $;

iv) the potential $Q(x,t)$.
\end{theorem}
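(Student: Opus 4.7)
My plan is to adapt the argument of Theorem \ref{thm:A51} to the twisted $A_5^{(2)}$ setting, exploiting the direct structural parallels: the Coxeter number is now $h=10$, there are $2h = 20$ rays and sectors, and by Lemma \ref{lem:1} each $\mathfrak{g}_\nu$ attached to a pair of opposite rays is either $sl(2)\oplus sl(2)$ or a single $sl(2)$, with the exact root content listed in Table \ref{tab:5}. Throughout I would invoke the regularity of the RHP to ensure that the $D_\nu^\pm(\lambda)$ have neither zeros nor poles in their domains of analyticity, so all reconstruction steps below have unique solutions.

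For parts (i) and (ii) I would first read off $S_0^\pm$, $S_1^\pm$, $T_0^\pm$, $T_1^\pm$ directly from the minimal set by writing them as exponentials over the root subsets $\delta_0^+$ and $\delta_1^+$ of Table \ref{tab:5}; either $\mathcal{T}_1$ or $\mathcal{T}_2$ determines these four Gauss factors by construction. The Mikhailov reduction condition of Theorem \ref{thm:1} then propagates them to every ray via
\begin{equation*}
C^\nu\bigl(S_k^\pm(x,t,\lambda)\bigr) = S_{2\nu+k}^\pm(x,t,\lambda\omega_2^\nu), \qquad C^\nu\bigl(T_k^\pm(x,t,\lambda)\bigr) = T_{2\nu+k}^\pm(x,t,\lambda\omega_2^\nu),
\end{equation*}
for $k=0,1$ and $\nu=0,\ldots,9$, covering all $20$ rays. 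To recover the diagonal factors, I would use the scalar jump
\begin{equation*}
d^+_{\nu;\alpha}(\lambda) - d^-_{\nu;\alpha}(\lambda) = \ln\bigl(1 - s^+_{\nu,\alpha}(\lambda)\,s^-_{\nu,-\alpha}(\lambda)\bigr), \qquad \lambda\in l_\nu,\; \alpha\in \delta_\nu^+,
\end{equation*}
together with the analyticity of $d^+_{\nu;\alpha}$ in $\Omega_\nu$ and of $d^-_{\nu;\alpha}$ in $\Omega_{\nu-1}$ and the canonical normalization $d^\pm_{\nu;\alpha}(\lambda)\to 0$ as $\lambda\to\infty$, which makes this a scalar additive Riemann--Hilbert problem with unique solution. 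Combining these ingredients via (\ref{eq:Tnu}) gives all $T_\nu$, and the sewing functions follow from the analogue of (\ref{eq:Gnu0}), namely $G_{\nu,0}(\lambda,t) = \hat S_\nu^-(\lambda,t) S_\nu^+(\lambda,t)$ for $\nu=0,1,\ldots,19$.

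Part (iii) I plan to argue exactly as in Theorem \ref{thm:A51}: starting from $\mathcal{T}_1$ one constructs $G_{\nu,0}$ and $D_\nu^\pm$, and then the factorization
\begin{equation*}
D_\nu^-(\lambda) G_{\nu,0}(\lambda,t) \hat D_\nu^+(\lambda) = \hat T_\nu^+(\lambda,t) T_\nu^-(\lambda,t), \qquad \nu=0,1,
\end{equation*}
admits a unique Gauss decomposition since $T_\nu^\pm$ live in opposite unipotent subgroups of $\mathfrak{g}_\nu$, yielding $\mathcal{T}_2$; by symmetry $\mathcal{T}_2$ recovers $\mathcal{T}_1$. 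For (iv), once every sewing function is known the regular generalized RHP (\ref{eq:rhp2a}) has a unique solution $\xi_\nu(x,t,\lambda)$ in each sector $\Omega_\nu$, and the potential is recovered from
\begin{equation*}
Q(x,t) = \lim_{\lambda\to\infty}\lambda\bigl(J - \xi_\nu(x,t,\lambda) J \xi_\nu^{-1}(x,t,\lambda)\bigr),
\end{equation*}
with $\nu$-independence enforced by the Mikhailov reduction (\ref{eq:Zh0}) together with $C(Q)=Q$.

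The place where I expect to have to take extra care is in the rays $l_\nu$ for which $\mathfrak{g}_\nu \simeq sl(2)$ rather than $sl(2)\oplus sl(2)$: here the Gauss decomposition degenerates to a single exponential and one must track the sector-dependent ordering of positive and negative roots (cf.\ Remark \ref{rem:1}) in order to keep the factorization $\hat T_\nu^+ T_\nu^-$ unambiguous. Apart from this bookkeeping the entire argument is a direct translation of the $A_5^{(1)}$ proof, with only $h=10$ and the $\delta_\nu^\pm$ of Table \ref{tab:5} entering differently, so no genuinely new analytic phenomenon appears.
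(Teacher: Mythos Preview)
Your proposal is correct and follows essentially the same route as the paper's own proof: you reconstruct $S_k^\pm,T_k^\pm$ for $k=0,1$ from the $\delta_k^+$ of Table~\ref{tab:5}, propagate to all twenty rays via the Coxeter reduction, recover $D_\nu^\pm$ from the scalar jump relations (\ref{eq:dpmT1b}) using analyticity and regularity, obtain the sewing functions and $T_\nu$ from (\ref{eq:Gnu0b}) and (\ref{eq:Tnu}), establish $\mathcal{T}_1\simeq\mathcal{T}_2$ by the unique Gauss factorization (\ref{eq:4.7b}), and finally extract $Q(x,t)$ from (\ref{eq:Qxt2}). Your extra remark about the rays with $\mathfrak{g}_\nu\simeq sl(2)$ is a harmless elaboration; the paper does not single this case out, since the single-$sl(2)$ Gauss decomposition is automatically unambiguous.
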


\begin{proof}[Idea of the proof]
The fact that the solution of the RHP is regular means that the corresponding Lax operator $L$ has no discrete eigenvalues. In other words the functions $D_\nu^\pm (\lambda)$ have neither zeroes nor poles in their regions of analyticity.

\begin{itemize}

   \item[i)] Let us now demonstrate that the sets $ \mathcal{T}_k$, $k=0, 1$ allow us to construct all $S_\nu^\pm (\lambda,t)$ and $T_\nu^\pm (\lambda,t)$. Obviously:
\begin{equation}\label{eq:S0S1b}\begin{aligned}
S_0^\pm =&  \exp \left( s_{0;14}^\pm E_{\pm (e_1-e_4)}  \right), \\
T_0^\pm =&  \exp \left( \tau_{0;14}^\pm E_{\pm (e_1-e_4)}  \right), \\
S_1^\pm =&  \exp \left( s_{1;15}^\pm E_{\pm (e_1-e_5)} + s_{1;24}^\pm E_{\pm (e_2-e_4)} \right), \\ T_1^\pm =&  \exp \left( \tau_{1;15}^\pm E_{\pm (e_1-e_5)} + \tau_{1;24}^\pm E_{\pm (e_2-e_4)} \right).
\end{aligned}\end{equation}

Note that the reduction condition (\ref{eq:RC}) on the FAS reflects also on their asymptotics for $x\to \pm \infty$ as follows:
\begin{equation}\label{eq:Snu0b}\begin{aligned}
C^\nu(S_{0}^\pm(x,t,\lambda)) =&  S_{2\nu }^\pm (x,t,\lambda \omega^{\nu}), &\quad
C^\nu(S_{1}^\pm(x,t,\lambda)) =&  S_{2\nu +1}^\pm (x,t,\lambda \omega^{\nu}), \\
C^\nu(T_{0}^\pm(x,t,\lambda)) =&  T_{2\nu }^\pm (x,t,\lambda \omega^{\nu}), &\quad
C^\nu(T_{1}^\pm(x,t,\lambda)) =&  T_{2\nu +1}^\pm (x,t,\lambda \omega^{\nu}),
\end{aligned}\end{equation}
for $\nu = 0,1 \dots, 19$. Thus we have recovered all $S_\nu^\pm (\lambda,t)$ and $T_\nu^\pm (\lambda,t)$.

   \item[ii)] It remains to recover $D_\nu^+ (\lambda)$ and $D_\nu^- (\lambda)$ (or $d_{\nu,\alpha}^\pm  (\lambda)$)  using the fact that they are analytic functions of $\lambda$ in the sector $\Omega_\nu$ and $\Omega_{\nu-1}$ respectively. In addition from eq. (\ref{eq:Tnu}) there follows, that (see Table \ref{tab:5}):
\begin{equation}\label{eq:dpmT1b}\begin{aligned}
d^+_{\nu; \alpha} - d^-_{\nu;\alpha} =&  \ln (1- s^+_{\nu,\alpha}  s^-_{\nu,-\alpha}), &\qquad \lambda \in & l_\nu, &\quad \alpha \in & \delta_\nu^+ \\
d^+_{\nu; \alpha} - d^-_{\nu;\alpha} =&  \ln (1- \tau^+_{\nu,\alpha}  \tau^-_{\nu,\alpha}), &\qquad \lambda \in & l_\nu, &\quad \alpha \in & \delta_\nu^+.
\end{aligned}\end{equation}
for $\nu = 0,1 \dots, 19$, which follow from eqs. (\ref{eq:Tnu}). In particular for $k=0, 1$:
\begin{equation}\label{eq:dpmT2b}\begin{aligned}
d^+_{0; \alpha} - d^-_{0;\alpha} =&  \ln (1- s^+_{0,\alpha}  s^-_{0,-\alpha}), &\quad \lambda &\in l_0, &\quad \alpha &\in \{e_1-e_4  \}; \\
d^+_{1; \alpha} - d^-_{1;\alpha} =&  \ln (1- s^+_{1,\alpha}  s^-_{1,\alpha}), &\quad \lambda &\in l_1, &\quad \alpha &\in \{  e_1 -e_5, e_2-e_4\},
\end{aligned}\end{equation}
and similar expressions in terms of $\tau^+_{k,\alpha}$ and $\tau^-_{k,-\alpha}$, $k=0, 1$.

   \item[iii)] Comparing the asymptotics (\ref{eq:limchi}) of the FAS for $x\to\pm \infty$ we easily find that the sewing functions $G_{\nu,0} $ in (\ref{eq:rhp2a}) are given by:
\begin{equation}\label{eq:Gnu0b}\begin{aligned}
G_{k,0}(\lambda ,t) = \hat{S}_k^-(\lambda, t) S_k^+(\lambda,t) = \hat{D}_k^-(\lambda)  \hat{T}_k^+(\lambda, t) T_k^-(\lambda,t) D_k^+(\lambda ), \quad \lambda \in l_k, \; k=0,1.
\end{aligned}\end{equation}
Thus we know the left hand side of the relation:
\begin{equation}\label{eq:4.7b}\begin{aligned}
D_k^-(\lambda ) G_{k,0} (\lambda,t) \hat{D}_k^+(\lambda,t) =  \hat{T}_k^+(\lambda, t) T_k^-(\lambda,t), \qquad k = 0, 1,
\end{aligned}\end{equation}
and the construction of $T_k^\pm (\lambda,t)$ reduces to decomposing the left hand side of (\ref{eq:4.7b}) into Gauss factors, which has unique solution. This means that knowing $\mathcal{T}_1$ we can recover  $\mathcal{T}_2$. Quite analogously one can prove that knowing  $\mathcal{T}_2$ we can uniquely recover  $\mathcal{T}_1$.

   \item[iv)] The RHP has unique regular solution. Suppose we have constructed the solution $\xi_\nu (x,t, \lambda)$ in the sector $\Omega_\nu$. Then we recover the potential from the well known relation:
   \begin{equation}\label{eq:Qxt2}\begin{aligned}
    Q(x,t) = \lim_{\lambda \to \infty} \lambda \left (J - \xi_\nu J \xi_\nu^{-1} (x,t,\lambda) \right).
   \end{aligned}\end{equation}
   This result does not depend on $\nu$ due to the reduction condition (\ref{eq:Zh0}) and to the fact that $C(Q(x,t)) = Q(x,t)$.

 \end{itemize}

\end{proof}

\section{Discussion and conclusions}

We specified in \cite{GMSV} the choice od the corresponding Kac-Moody algebras and formulated the specific Lax operators and the corresponding direct and scattering problems. In each of the cases one needs to take into account specific peculiarities. For example in the case of $A_5^{(2)}$ after taking the average on the Coxeter automorphism the elements $B[2k-1,4]$ belong to the center of the algebra instead to its Cartan subalgebra.

The constructions that we outlined allow one to apply the dressing Zakharov-Shabat method and derive the soliton solutions of the corresponding mKdV and 2-dimensional Toda field theories. One may expect additional difficulties in this, due to the fact that the Coxeter symmetries require that even the simplest dressing factors must contain at leat $2h$ simple poles (that is 12 and 20 poles) whose residues $P_k$ must be related by the Coxeter automorphism. Therefore it will be important that deriving the projectors we must strictly stick to the construction of the FAS in each of the sectors of analyticity.

The main ideas in this and many previous publications of the author (see e.g. \cite{Ger, VG-13, Ge}) are based on the notion of fundamental analytic solution introduced by A. B. Shabat \cite{Sha*75, Sha}.

Another important trend started by A. B. Shabat and his collaborators concerns the classification of the integrable NLEE, see  \cite{MSY, Sha3, ShaYam, ASYa, SMS, MWN1, MWN2} and the numerous references therein. The idea is based on the theorem that if a given NLEE possesses master symmetry then it has an infinite number of integrals of motion and therefore must be integrable.

The final remark here concerns the fact that  the one to one correspondence between the minimal sets of scattering data and the potential $Q(x,t)$ follows also from the expansions over the squared solutions of $L$, see \cite{Ger, Ge, VG-13, GYa, SIAM*14}. These ideas will be published elsewhere.

\section*{Acknowledgements}
I am grateful to Dr Alexander Stefanov and Dr Stanislav Varbev for useful discussions.

\appendix
\section{Basis and grading of $A_{5}^{(1)}$}

The rank of the algebra $A_{5}^{(1)}\simeq sl(6)$ is $5$, the Coxeter number is $h=6$ and its exponents are $1, 2, 3, 4, 5$.
 The root system and the set of simple roots $\alpha_j$ of $A_{5}^{(1)} \simeq sl(6)$ is
 \begin{equation}\label{eq:A51RS}\begin{aligned}
 \Delta \equiv &\Delta^+ \cup \Delta^-, &\qquad \Delta^\pm \equiv & \{ \pm (e_j - e_k), \quad 1 \leq j < k \leq 6 \} ,  \\
  \alpha_j =&  e_j - e_{j+1}, &\qquad j =& 1,\dots, 5.
 \end{aligned}\end{equation}

 The Cartan-Weyl basis of $A_{5}^{(1)}$ in the typical representation is as follows:
 \begin{equation}\label{eq:A51CW}\begin{aligned}
 H_{e_j - e_k} =&  E_{jj} - E_{kk}, &\qquad E_{e_j - e_k} &= E_{jk}, &\qquad E_{-\alpha}= &E_\alpha^T, \\
 [H_\alpha, E_\beta] =&  (\alpha, \beta) E_\beta, &\quad  & & [E_\alpha, E_\beta] =&  N_{\alpha,\beta} E_{\alpha+\beta}.
 \end{aligned}\end{equation}
 The numbers $N_{\alpha,\beta} = - N_{\beta, \alpha}$; they are non-vanishing if and only if the sum of roots $\alpha +\beta \in \Delta$.

The Dynkin diagram  of  $A_5$ algebra, and the extended Dynkin diagrams of $A_{5}^{(1)}$ and $A_{5}^{(2)}$  are shown in Figure \ref{fig:1}.


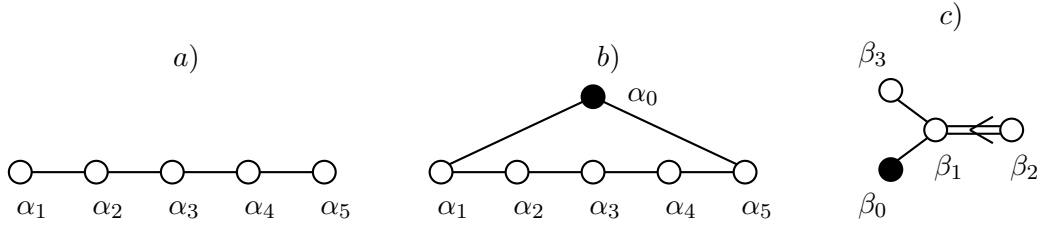
\begin{figure}
\begin{center}
\begin{minipage}[b]{0.3\textwidth}
  \begin{tikzpicture}[scale=0.5]
    \draw (7,3) node[anchor=east]  {$a)$};
   \draw (3,-1) node[anchor=east]  {$\alpha_1$};
   \draw (5,-1) node[anchor=east]  {$\alpha_2$};
   \draw (7,-1) node[anchor=east]  {$\alpha_3$};
   \draw (9,-1) node[anchor=east]  {$\alpha_4$};
   \draw (11,-1) node[anchor=east]  {$\alpha_5$};
    \foreach \x in {1,...,5}
    \draw[xshift=\x cm,thick] (\x cm,0) circle (.3cm);
     \foreach \y in {1.15,...,4.15}
    \draw[xshift=\y cm,thick] (\y cm,0) -- +(1.4 cm,0);
	\foreach \z in {1,...,5}
	\coordinate (\z cm,-2cm);
  \end{tikzpicture}
\end{minipage}
\begin{minipage}[b]{0.03\textwidth}
\hfill
\end{minipage}
\begin{minipage}[b]{0.3\textwidth}
  \begin{tikzpicture}[scale=0.5]
    \draw (5,3) node[anchor=east]  {$b)$};
     \draw (6,2) node[anchor=east]  {$\alpha_0$};
   \draw (1,-1) node[anchor=east]  {$\alpha_1$};
   \draw (3,-1) node[anchor=east]  {$\alpha_2$};
   \draw (5,-1) node[anchor=east]  {$\alpha_3$};
   \draw (7,-1) node[anchor=east]  {$\alpha_4$};
   \draw (9,-1) node[anchor=east]  {$\alpha_5$};
    \foreach \x in {0,...,4}
    \draw[thick,xshift=\x cm] (\x cm,0) circle (3 mm);
    \foreach \y in {0,...,3}
    \draw[thick,xshift=\y cm] (\y cm,0) ++(.3 cm, 0) -- +(14 mm,0);
    \draw[thick,fill=black] (4 cm,2 cm) circle (3 mm);
 \draw[thick] (2 mm, 2 mm) -- +(3.6 cm, 1.7 cm);
 \draw[thick] (4.2 cm, 1.9 cm) -- +(3.6 cm, -1.7 cm);
  \end{tikzpicture}
\end{minipage}
\begin{minipage}[b]{0.03\textwidth}
\hfill
\end{minipage}
\begin{minipage}[b]{0.3\textwidth}
  \begin{tikzpicture}[scale=0.5]
    \draw (9,3) node[anchor=east]  {$c)$};
     \draw (7,-2) node[anchor=east]  {$\beta_0$};
   \draw (9,-1) node[anchor=east]  {$\beta_1$};
   \draw (11,-1) node[anchor=east]  {$\beta_2$};
   \draw (7,2) node[anchor=east]  {$\beta_3$};
\draw[xshift=4 cm,thick] (4 cm,0) circle (.3cm);
\draw[xshift=5 cm,thick] (5 cm, 0) circle (.3 cm);
  \draw[xshift=5 cm,thick] (30: 21 mm) circle (.3cm);
    \draw[xshift=5 cm,thick,fill=black] (-30: 21 mm) circle (.3cm);
  \draw[thick] (7 cm, 0.8 cm) -- +(0.8 cm, -0.6 cm);
  \draw[thick] (7 cm, -0.8 cm) -- +(0.8 cm, 0.6 cm);
   \draw[thick] (8.3 cm, .1 cm) -- +(1.4 cm,0);
    \draw[thick] (8.3 cm, -.1 cm) -- +(1.4 cm,0);
   \draw[xshift=8.9 cm,thick] (15: 0 mm) -- (30: 7 mm);
    \draw[xshift=8.9 cm,thick] (-15: 0 mm) -- (-30: 7 mm);
  \end{tikzpicture}
\end{minipage}
 \caption{Dynkin diagrams (DD) of $A_5$ and related Kac-Moody algebras: a) DD of $A_5 \simeq sl(6)$; b) extended DD of $A_5$; c) DD of $A_5^{(2)}$. }\label{fig:1}
\end{center}
\end{figure}

Let us now briefly outline  how to define Kac-Moody algebra starting from a simple Lie algebra $\mathfrak{g}$ which in our case is chosen to be $A_5 \simeq sl(6)$. First we  use a Coxeter automorphism to introduce a grading in the Lie algebra $A_{5}$:
\begin{equation*}
\begin{aligned}
\mathfrak{g} =&  \mathop{\oplus}\limits_{k=0}^{5} \mathfrak{g}^{(k)}, &\qquad
\tilde{\mathfrak{g}} =&  \mathop{\oplus}\limits_{s=0}^{5} \tilde{\mathfrak{g}}_{s},
\end{aligned}
\end{equation*}
where the linear subspaces are such that
\begin{equation*}
\begin{aligned}
C_1 X C_1^{-1} =&  \omega_1 ^{-k} X, \quad \forall  X&\in \mathfrak{g}^{(k)},
 \qquad  \tilde{ C}_1 Y \tilde{ C}_1^{-1} =&  \omega^{-s} Y, \quad \forall Y\in \tilde{ \mathfrak{g}}_{s},
\end{aligned}
\end{equation*}
where $\omega_1 = e^{2\pi i/6}$. Each of the gradings satisfies
\begin{equation}\label{eq:grad}
[ \mathfrak{g}^{(k)}, \mathfrak{g}^{(m)}] \in \mathfrak{g}^{(k+m)}, \qquad
[ \tilde{ \mathfrak{g}}_{s}, \tilde{ \mathfrak{g}}_{p}] \in \tilde{ \mathfrak{g}}_{s+p},
\end{equation}
where $(k+m)$ and $(s+p)$ must be understood modulo $6$. The indices for $A_{5}^{(1)}$ are everywhere taken modulo $6$. Using this grading we can now construct polynomials in $\lambda$ and $\lambda^{-1}$ such that:
\begin{equation}\label{eq:KM}\begin{aligned}
 X(\lambda) = \sum_{s=-\infty}^{N} \lambda^s X_s, \qquad X_s \in \mathfrak{g}^{(s)}.
\end{aligned}\end{equation}
which will be the elements of the Kac-Moody algebra \cite{Kac, Cart}.
Here the upper index of the subspace $s$ is evaluated modulo 6. Obviously the commutator of two such polynomials in $\lambda$ and $\lambda^{-1}$ due to the properties of the grading (\ref{eq:grad}) will again be of the form (\ref{eq:KM}). Of course the rigorous definition of Kac-Moody  algebra requires additional structures, which we will not mention now.

For the case of $A_{5}$ algebra are possible two different types of Coxeter's automorphisms which will give two Kac-Moody algebras  $A_{5}^{(1)}$ with height $1$ and  $A_{5}^{(2)}$ with height $2$.

There are two standard choices $C_{1}$ and $\tilde{C}_{1}$ for the Coxeter  automorphism for the algebra $\mathfrak{g}\simeq A_5$. This is $\mathbb{Z}_{6}$ automorphism. With this automorphism we effectively work with Kac-Moody algebra $A_{5}^{(1)}$. Indeed, each of these choices satisfies $C_1^{6}=\openone$,  $\tilde{ C}_1^{6}=\openone$ and each of these automorphisms induces a grading in $\mathfrak{g}$

In what follows it is specified the choice of the automorphisms by
\begin{equation}
C_1 =  \left(\begin{array}{cccccc}
0 & 1 & 0 & 0 & 0 & 0 \\ 0 & 0 & 1 & 0 & 0 & 0 \\ 0 & 0 & 0 & 1 & 0 & 0 \\ 0 & 0 & 0 & 0 & 1 & 0 \\ 0 & 0 & 0 & 0 & 0 & 1 \\ -1 & 0 & 0 & 0 & 0 & 0  \end{array}\right),
\qquad \tilde{C}_1 = \left(\begin{array}{cccccc} 1 & 0 & 0 & 0 & 0 & 0 \\ 0 & \omega_1  & 0 & 0 & 0 & 0 \\ 0 & 0 &  \omega_1 ^2 & 0 & 0 & 0 \\ 0 & 0 & 0 & \omega_1 ^3 & 0 & 0 \\ 0 & 0 & 0 & 0 & \omega_1 ^4 & 0 \\ 0 & 0 & 0 & 0 & 0 & \omega_1 ^5  \end{array}\right).
\end{equation}
Obviously $C^6 = \tilde{C}^6 = \openone$.
Below we will use also the notations $C_1= J_{1}^{(0)}$ and $\tilde{C}_1 =J_{0}^{(1)}$ along with the more general ones $J_{s}^{(k)}$ which provide a convenient basis in $A^{(1)}_{5}$ which satisfies the above gradings, see  \cite{Bourb, Helg, Cart, Kac}:
\begin{equation*}
J_s^{(k)} = \sum_{j=1}^{6} \epsilon_{j,j+s} \omega_1  ^{-k(j-1)} E_{j,j+s}., \qquad  \epsilon_{j,j+s} = \begin{cases} 1 & \quad \mbox{if} j+s \leq 6 \\ -1 & \mbox {if} \quad j+k>6. \end{cases}.
\end{equation*}
Here the ($6\times6$) matrices $E_{km}$ are defined by $(E_{km})_{sp} = \delta_{ks}\delta_{mp}$.

The elements  of this basis satisfy the commutation relations
\begin{equation*}
\left[ J_s^{(k)} , J_l^{(m)} \right] = \left(\omega_1^{-ms} - \omega_1^{-kl} \right) J_{s+l}^{(k+m)}.
\end{equation*}
Besides it is easy to check that
\begin{equation*}
 C^{-1}_{1} J_s^{(k)} C_{1} = \omega_1 ^{-k} J_s^{(k)}, \qquad \tilde{ C}^{-1}_{1} J_s^{(k)}
 \tilde{ C}_{1} = \omega_1 ^{-s} J_s^{(k)}
\end{equation*}
and
\begin{equation*}
J_{s}^{(k)}J_{p}^{(m)} = \omega_1 ^{-sm} J_{s+p}^{(k+m)}, \qquad (J_{s}^{(k)})^{-1}=(J_{s}^{(k)})^{\dagger}.
\end{equation*}

Using this, the bases in each of the linear subspaces can be specified as follows
\begin{equation*}
\mathfrak{g}^{(k)} \equiv \mbox{l.c.\;} \{ J_s^{(k)}, \quad s=1, \dots, 6\}, \qquad
\tilde{ \mathfrak{g}}_{s} \equiv \mbox{l.c.\;} \{ J_s^{(k)}, \quad k=1, \dots, 6\}.
\end{equation*}

The basis that we constructed for $A_{5}^{(1)}$ is
\begin{equation*}
\begin{aligned}
\mathfrak{g}^{(0)} &: \lc \{J_1^{(0)} , J_2^{(0)}, J_3^{(0)}, J_4^{(0)}, J_5^{(0)}\}, &\quad
\mathfrak{g}^{(1)} &: \lc \{ J_1^{(1)} , J_2^{(1)}, J_3^{(1)}, J_4^{(1)}, J_5^{(1)},  J_6^{(1)}\}, \\
\mathfrak{g}^{(2)} &: \lc \{ J_1^{(2)} , J_2^{(2)}, J_3^{(2)}, J_4^{(2)}, J_5^{(2)},  J_6^{(2)}\}, &\quad
\mathfrak{g}^{(3)} &: \lc \{J_1^{(3)} , J_2^{(3)}, J_3^{(1)}, J_4^{(3)}, J_5^{(3)},  J_6^{(3)}\}, \\
\mathfrak{g}^{(4)} &: \lc \{ J_1^{(4)} , J_2^{(4)}, J_3^{(4)}, J_4^{(4)}, J_5^{(4)},  J_6^{(4)}\}, &\quad
\mathfrak{g}^{(5)} &: \lc \{ J_1^{(5)} , J_2^{(5)}, J_3^{(5)}, J_4^{(5)}, J_5^{(5)},  J_6^{(5)}\}.
\end{aligned}
\end{equation*}

\section{Basis and grading of $A_{5}^{(2)}$}

Let us now briefly outline the gradings for $A_{5}^{(2)}$. Now we will use as Coxeter automorphism $C_2 =C_1\circ V$ which is a composition of $C_1$ with the external automorphism $V$ of $A_5$.
$V$  is  generated by the symmetry of its Dynkin diagram. In the five-dimensional space of roots $V$ maps $V: e_k \to -e_{7-k}$, $k=1,\dots, 6$. On any of the root vectors  $X$, $V$ acts as
\begin{equation}\label{eq:V1b}\begin{aligned}
V(X) = - S_2X^T S_2^{-1}, \qquad  S_2 = E_{1,6} -E_{2,5} +E_{3,4} -E_{4,3} + E_{5,2} -E_{6,1} .
\end{aligned}\end{equation}
Note that $ S_2^{-1} = -S_2$. Obviously $V$ splits the Lie algebra $\mathfrak{g} \simeq A_5$ into two: $\mathfrak{g} = \mathfrak{g}_0 \cup \mathfrak{g}_1$ whose bases, corresponding to the positive roots  are given as follows:
\begin{equation}\label{eq:V2b}\begin{aligned}
&  \mathfrak{g}^{(0)} \colon &\quad & \{ \tilde{\mathcal{E}}_{ij}^+, \quad  \tilde{\mathcal{E}}_{i\bar{j}}^+, \quad \tilde{\mathcal{E}}_{j\bar{j}}^+ ,\quad 1\leq i <j \leq 3 \}, \\
& \mathfrak{g}^{(1)} \colon &\quad & \{ \tilde{\mathcal{E}}_{ij}^-, \quad  \tilde{\mathcal{E}}_{i\bar{j}}^-, \quad  1\leq i <j \leq 3\}, \\
& &  & \tilde{\mathcal{E}}_{ij}^\pm = E_{ij} \mp  (-1)^{i-j} E_{\bar{j}, \bar{i}}, \quad
\tilde{\mathcal{E}}_{i\bar{j}}^\pm = E_{i\bar{j}} \pm  (-1)^{i-j} E_{j, \bar{i}}, \quad
\tilde{\mathcal{E}}_{j\bar{j}}^+ = E_{j\bar{j}}
\end{aligned}\end{equation}
Here we can identify the root vectors:
\begin{equation}\label{eq:Eeij}\begin{aligned}
 E_{e_i - e_j}^\pm  =&   \tilde{\mathcal{E}}_{ij}^\pm, &\quad  E_{e_i + e_j}^\pm  =&   \tilde{\mathcal{E}}_{i\bar{j}}^\pm, \quad E_{2e_j}^+ =&  \tilde{\mathcal{E}}_{j\bar{j}}^+ .
\end{aligned}\end{equation}
Obviously $ E_{e_i - e_j}^+$, $ E_{e_i + e_j}^+$ and $E_{2e_j}^+$ are the generators of $sp(6)$ corresponding to its positive roots; $ E_{e_i - e_j}^-$ and $ E_{e_i + e_j}^-$ provide the positive roots of $\mathfrak{g}_1$. It is easy to check that they satisfy standard commutation relations, taking into account the $\mathbb{Z}_2$-grading such as:
\begin{equation}\label{eq:Ealp}\begin{aligned}
{} [  E_\alpha^\pm, E_{-\alpha}^\pm] =&  H_\alpha, &\quad  [H, E_\alpha^\pm ] = & \alpha(H)  E_\alpha^\pm , \\
[E_\alpha^-, E_\beta^- ] =&  n_{\alpha,\beta}^- E_{\alpha+\beta}^+, &\quad
[E_\alpha^-, E_\beta^+ ] =&  n_{\alpha,\beta}^+ E_{\alpha+\beta}^-, &\quad
\end{aligned}\end{equation}
etc. Let us now take into account the Coxeter automorphism which is given by
$C_2(X) = C_1 V(X)C_1^{-1} = -C_1 S_2 X^T S_2^{-1} C_1^{-1}$,
One can check that $C_2^{10} =\openone$, so the Coxeter number is $h_2=10$.
$C_2$ splits the  roots of $A_5$ into three orbits  each containing 10 roots.
The grading condition is
$\big[ \mathfrak{g}^{(k)}, \mathfrak{g}^{(l)} \big] \subset \mathfrak{g}^{(k+l)}$, 
$k, l = 1,\dots, 10$,
where $k+l$ is taken modulo $10$. We assume that the orbits start from the root vectors $E_{12}$, $E_{34}$ and $E_{13}$. We consider also the action of $C_2$ also on the Cartan generators.    The basis for each of the subspaces $\mathfrak{g}^{(k)}$ is obtained by taking the weighted average over the action of $C_2$:
\begin{equation}\label{eq:Eijk}\begin{aligned}
\mathcal{E}_{ij}^{(k)} = \sum_{s=0}^{9} \omega_2^{-ks} C_2^s(E_{ij}), \qquad \mathcal{H}_{1}^{(k)} = \sum_{s=0}^{9} \omega_2^{-ks} C_2^s(E_{11}), \qquad \omega_2 = \exp(2\pi i/10).
\end{aligned}\end{equation}
It is easy to check that
$ C_2 (\mathcal{E}_{ij}^{(k)}) = \omega_2^k \mathcal{E}_{ij}^{(k)}$, $ C_2 (\mathcal{H}_{1}^{(k)}) = \omega_2^k \mathcal{H}_{1}^{(k)}$,
i.e. $\mathcal{E}_{ij}^{(k)})$ and $ \mathcal{H}_{1}^{(k)}$ belong to $\mathfrak{g}^{(k)}$. We will provide this basis explicitly:
\begin{equation}\label{eq:EEijk}\begin{aligned}
\mathcal{E}_{12}^{(k)} =&  \left(\begin{array}{cccccc} 0 & 1 & 0 & 0 & -\omega_2^{-3k} & 0 \\  -\omega_2^{-5k} & 0 & \omega_2^{-2k} & 0 & 0 & 0 \\  0 & -\omega_2^{-7k} & 0 & 0 & 0 & -\omega_2^{-4k} \\  0 & 0 & 0 & 0 & 0 & 0 \\  \omega_2^{-8k} & 0 & 0 & 0 & 0 & -\omega_2^{-k} \\  0 & 0 & \omega_2^{-9k} & 0 & \omega_2^{-6k} & 0  \end{array}\right), \\
\mathcal{E}_{34}^{(k)} =&  \left(\begin{array}{cccccc} 0 & 0 & 0 & \omega_2^{-6k} & 0 & 0 \\  0 & 0 & 0 & -\omega_2^{-8k} & 0 & 0 \\  0 & 0 & 0 & 1 & 0 & 0 \\  -\omega_2^{-k} & \omega_2^{-3k} & -\omega_2^{-5k} & 0 & \omega_2^{-9k} & -\omega_2^{-7k} \\  0 & 0 & 0 & -\omega_2^{-4k} & 0 & 0 \\  0 & 0 & 0 & \omega_2^{-2k} & 0 & 0  \end{array}\right), \\
\mathcal{E}_{13}^{(k)} =&  \left(\begin{array}{cccccc} 0 & 0 & 1 & 0 & 0 & -\omega_2^{-k} \\  0 & 0 & 0 & 0 & -\omega_2^{-3k} & -\omega_2^{-2k} \\  -\omega_2^{-5k} & 0 & 0 & 0 & -\omega_2^{-4k} & 0 \\  0 & 0 & 0 & 0 & 0 & 0 \\  0 & \omega_2^{-8k} & \omega_2^{-9k} & 0 & 0 & 0 \\  \omega_2^{-6k} & \omega_2^{-7k}& 0 & 0 & 0 & 0  \end{array}\right), \\
\mathcal{H}_{1}^{(k)} =&  c_{H,k} \diag ( \omega_2^{-5k},  \omega_2^{-7k}, \omega_2^{-9k}, 0,  \omega_2^{-3k}, \omega_2^{-k}  ).
\end{aligned}\end{equation}
where $c_{H,k} =\omega_2^{5k}-1$. Since $\omega_2^5 =-1$ it is easy to see that $c_{H,k} \neq 0$ for $k =1,3,5,7$ and 9. Thus the subspaces $\mathfrak{g}^{(p)}$ have nontrivial section with the Cartan subalgebra if and  only if $p$ is an exponent of $A_5^{(2)}$.
It is easy to check that $\tilde{\mathcal{E}}_{ij}^{+}$ provide a basis for the subalgebra $sp(6)$ of $A_{5}^{(2)}$.
Then the  basis in each of the subspaces $\mathfrak{g}^{(k)}$ is as follows
\begin{equation}\label{eq:gr2c}
\begin{aligned}
\tilde{\mathfrak{g}}^{(0)} =&  \lc \{ \tilde{\mathcal{E}}_{11}^{+}, \tilde{\mathcal{E}}_{22}^{+}, \tilde{\mathcal{E}}_{33}^{+}\}, &\quad
\tilde{\mathfrak{g}}^{(1)} =&  \lc \{ \tilde{\mathcal{E}}_{21}^{+}, \tilde{\mathcal{E}}_{32}^{+}, \tilde{\mathcal{E}}_{43}^{+}, \tilde{\mathcal{E}}_{15}^{-}\}, \\
\tilde{\mathfrak{g}}^{(2)} =&  \lc \{  \tilde{\mathcal{E}}_{31}^{+}, \tilde{\mathcal{E}}_{42}^{+}, \tilde{\mathcal{E}}_{14}^{-}\}, &\quad
\tilde{\mathfrak{g}}^{(3)} =&  \lc \{\tilde{\mathcal{E}}_{14}^{+}, \tilde{\mathcal{E}}_{25}^{+}, \tilde{\mathcal{E}}_{13}^{-}, \tilde{\mathcal{E}}_{24}^{-}\}, \\
\tilde{\mathfrak{g}}^{(4)} =&  \lc \{ \tilde{\mathcal{E}}_{51}^{+}, \tilde{\mathcal{E}}_{12}^{-}, \tilde{\mathcal{E}}_{23}^{-}\}, &\quad
\tilde{\mathfrak{g}}^{(5)} =&  \lc \{ \tilde{\mathcal{E}}_{16}^{+}, \tilde{\mathcal{E}}_{61}^{+}, \tilde{\mathcal{E}}_{33}^{-}-\tilde{\mathcal{E}}_{11}^{-}, \tilde{\mathcal{E}}^{-}_{33} -\tilde{\mathcal{E}}_{22}^{-}\},\\
\tilde{\mathfrak{g}}^{(6)} =&  \lc \{ \tilde{\mathcal{E}}_{15}^{+}, \tilde{\mathcal{E}}_{21}^{-}, \tilde{\mathcal{E}}_{32}^{-}\}, &\quad
\tilde{\mathfrak{g}}^{(7)} =&  \lc \{ \tilde{\mathcal{E}}_{41}^{+}, \tilde{\mathcal{E}}_{52}^{+}, \tilde{\mathcal{E}}_{31}^{-}, \tilde{\mathcal{E}}_{42}^{-}\},\\
\tilde{\mathfrak{g}}^{(8)} =&  \lc \{ \tilde{\mathcal{E}}_{13}^{+}, \tilde{\mathcal{E}}_{24}^{+}, \tilde{\mathcal{E}}_{41}^{-}\}, &\quad
\tilde{\mathfrak{g}}^{(9)} =&  \lc \{ \tilde{\mathcal{E}}_{12}^{+}, \tilde{\mathcal{E}}_{23}^{+}, \tilde{\mathcal{E}}_{34}^{+}, \tilde{\mathcal{E}}_{51}^{-}\}.
\end{aligned}
\end{equation}

As a result the rank of $A_{5}^{(2)}$ is $3$, $h=10$ and its exponents are $1, 3, 5, 7, 9$, see \cite{DS, Cart}.

An alternative grading of $A_5^{(2)}$ can be achieved by using a realization of the Coxeter automorphism as an element of the Cartan subgroup. More precisely, one can use the automorphism $\tilde{C}_2$ \cite{DS}:
\begin{equation}\label{eq:C2t}\begin{aligned}
 \tilde{C}_2 (X) = -S_2 X^T S_2^{-1}. \quad S_2 = \diag (1, -\omega_2 , \omega_2^2, -\omega_2^3, \omega_2^4, -\omega_2^5 ), \quad \omega_2 = \exp(2\pi i/10),
\end{aligned}\end{equation}
and where the transposition must be taken with respect to the second diagonal of the matrix. With choice for the Coxeter automorphism the set of admissible roots of $A_5^{(2)}$ acquire the form:
\begin{equation}\label{eq:Etij}\begin{aligned}
 \mathcal{E}_{\beta_0} =&  \frac{\zeta}{2}(E_{1,5}+E_{2,6}), &\quad  \mathcal{E}_{-\beta_0} =&  2(E_{5,1}+E_{6,2}) \zeta^{-1}, & \quad \mathcal{H}_{\beta_0} =&  \mathcal{H}_1 +\mathcal{H}_{2}, \\
 \mathcal{E}_{\beta_i} =&  \zeta (E_{i+1,i}+E_{7-i,6-i}), &\quad  \mathcal{E}_{-\beta_i} =&  (E_{i,i+1}+E_{6-i,7-i}) \zeta^{-1}, & \quad \mathcal{H}_{\beta_i} =&
 \mathcal{H}_{i+1} -\mathcal{H}_{i},\\
 \mathcal{E}_{\beta_3} =&  \zeta E_{4,3}, &\quad  \mathcal{E}_{-\beta_3} =&  E_{3,4} \zeta^{-1}, & \quad \mathcal{H}_{\beta_i} =&  -\mathcal{H}_{3} +\mathcal{H}_{4},
\end{aligned}\end{equation}
where $i=1, 2$, $(E_{km})_{ab} = \delta_{ka} \delta_{mb}$ and $ \mathcal{H}_1 = E_{i,i} - E_{7-i,7-i}$.

\end{document}